\documentclass{article}
\usepackage[utf8]{inputenc}
\usepackage{amssymb,amsmath,mathtools}
\usepackage[title]{appendix}
\usepackage{bbm}

\usepackage[margin=1in]{geometry} % margin adjust

\usepackage{sectsty}

\sectionfont{\fontsize{12}{15}\selectfont}

\DeclareMathAlphabet{\mathpzc}{OT1}{pzc}{m}{it}

% todo and notes
\usepackage{xcolor}

\newcommand{\topic}[1]{#1}

\usepackage{mdframed} % boxes

\usepackage{tikz}
\usetikzlibrary{quantikz2}

% Standard math operator commands and overrides

\DeclareMathOperator*{\argmin}{arg\,min}
\DeclareMathOperator*{\esssup}{ess\,sup}

\DeclareMathAlphabet\mathbfcal{OMS}{cmsy}{b}{n}

\newcommand{\tr}{\operatorname{Tr}}
\newcommand{\norm}[1]{\lVert#1\rVert}

% QIP shortcuts
 % 
\newcommand\mi{\mkern1.5mu{:}\mkern1.5mu} % "semidirect"identity

% states

\newcommand{\dm}[1]{ | #1 \rangle \langle #1|}

\newcommand{\score}{\mathpzc{s}}

% Spaces of operators , channels, etc

\newcommand{\mT}{\mathcal{T}}% trace class operators
\newcommand{\mB}{\mathcal{B}}% bounded operators
\newcommand{\mH}{\mathcal{H}} % Hilbert space
\newcommand{\mK}{\mathcal{K}}
\DeclareMathOperator{\cptp}{C} % channels, cptp
 % channels, cptp
 % channels, cptp
\DeclareMathOperator{\Hmin}{H_{min}} % cond. min ent
\DeclareMathOperator{\Dmax}{D_{max}} % max ent.
\DeclareMathOperator{\HHH}{H} % conditional ent
\newcommand{\D}{\operatorname{D}} % Density matrices

% Random variables
\newcommand{\A}{\mathcal{A}} % alphabet for r.v. A
\newcommand{\Ac}{\mathrm{A}} % Continuous space containing alpha
\newcommand{\Wc}{\mathbb{W}}
\newcommand{\Vc}{\mathbb{V}}
\newcommand{\B}{\mathcal{B}} % alphabet for r.v. B

% Common channels/symbols
\newcommand{\mN}{\mathcal{N}}
\newcommand{\mD}{\mathcal{D}}

\newcommand{\I}{\mathbb{I}} % identity
\newcommand{\E}{\mathbb{E}} % Expected value
 % Expected value

% data spaces:
\newcommand{\X}{\mathcal{X}}
\newcommand{\Y}{\mathcal{Y}}
\newcommand{\mP}{\mathcal{P}} % Hilbert space
\newcommand{\mW}{\mathcal{W}} % covering
\newcommand{\mV}{\mathcal{V}} % packing
\newcommand{\mL}{\mathcal{L}}

% vector spaces:

% other assorted groups and fields
\newcommand{\C}{\mathbb{C}} % Complex numbers
 % a hilbert space numbers
\newcommand{\R}{\mathbb{R}} % Real numbers
\newcommand{\U}[1]{\operatorname{U}\left( #1 \right)} % unitaries

% a frequently used vector
 % Complex numbers
 % Complex numbers

\newcommand{\Ahat}{\hat{A}}

% theorems and props
\usepackage{amsthm}

\makeatletter
\newtheorem*{rep@theorem}{\rep@title}
\newcommand{\newreptheorem}[2]{%
\newenvironment{rep#1}[1]{%
 \def\rep@title{#2 \ref{##1}}%
 \begin{rep@theorem}}%
 {\end{rep@theorem}}}
\makeatother

% These are for repeating theorems using the same numbering

\theoremstyle{definition}
\newtheorem{definition}{Definition}
\newtheorem{theorem}{Theorem}
\newtheorem*{theorem*}{Theorem}
\newreptheorem{theorem}{Theorem}

\newtheorem{lemma}[theorem]{Lemma}
\newtheorem*{lemma*}{Lemma}

\newtheorem*{corollary*}{Corollary}
\newtheorem{proposition}[theorem]{Proposition}
\newreptheorem{proposition}{Proposition}

\usepackage{authblk}

% bibliography and citation settings
\usepackage[
sorting=none,
backref=true,
url=false,
isbn=false,
backend=biber,
style=numeric-comp
]{biblatex}
% no "In:"
\renewbibmacro{in:}{}
\bibliography{main}
% backwards references
\usepackage{hyperref}
\hypersetup{
     colorlinks   = true,
     citecolor    = blue,
     linkcolor    = blue,
}
\DefineBibliographyStrings{english}{%
  backrefpage = {p},
  backrefpages = {pp},
}
% url with linebreak
\usepackage{xurl}

\begin{document}
\title{On the coherent extension of some Fano-type learning bounds}
\author[1,2]{Evan Peters}
\affil[1]{Institute for Quantum Computing and Department of Physics, University of Waterloo, Waterloo, Ontario, N2L 3G1, Canada}
% \affil[2]{Department of Physics, University of Waterloo, Waterloo, Ontario, N2L 3G1, Canada}
\affil[2]{Perimeter Institute for Theoretical Physics, Waterloo, Ontario, N2L 2Y5, Canada}
\date{\today}

\maketitle

\begin{abstract}
    Information theory provides tools to predict the performance of a learning algorithm on a given dataset. For instance, the accuracy of learning an unknown parameter can be upper bounded by reducing the learning task to hypothesis testing for a discrete random variable, with Fano's inequality then stating that a small conditional entropy between a learner's observations and the unknown parameter is necessary for successful estimation. This work first extends this relationship by demonstrating that a small conditional entropy is also sufficient for successful learning, thereby establishing an information-theoretic lower bound on the accuracy of a learner. This connection between information theory and learning suggests that we might similarly apply quantum information theory to characterize learning tasks involving quantum systems. Observing that the fidelity of a finite-dimensional quantum system with a maximally entangled state (the singlet fraction) generalizes the success probability for estimating a discrete random variable, we introduce an entanglement manipulation task for infinite-dimensional quantum systems that similarly generalizes classical learning. We derive information-theoretic bounds for succeeding at this task in terms of the maximal singlet fraction of an appropriate finite-dimensional discretization. As classical learning is recovered as a special case of this task, our analysis suggests a deeper relationship at the interface of learning, entanglement, and information.
\end{abstract}

\section{Introduction}

\topic{A key challenge in applying learning algorithms is knowing how much data is needed to successfully learn from observations. To this end, information theory plays an essential role in understanding the potential performance of a classical machine learning algorithm on some given dataset.} As Shannon theory \cite{shanon1948} is best suited to describing discrete variables, a typical approach to characterize the learning of an unknown parameter involves discretization. For instance, if we wish to characterize a learner's ability to determine an unknown parameter to within $\epsilon$ accuracy, we might divide the parameter space into a set of non-overlapping balls  with radius $\epsilon$, and then consider the problem of determining which ball contains the unknown parameter. Fano's inequality, which characterizes how much mutual information between the learner's observations and the target $\epsilon$-ball is necessary for success, then lower bounds the difficulty of this \textit{hypothesis testing} task \cite{fano_inequality,hasminskii1981}. This technique has been applied to establish lower bounds on the error of learning algorithms involving sparse vectors (e.g. \textit{compressed sensing} \cite{compressedsensing1,compressedsensing2,candes2009power,candes2012exact}) for classical data gathered from both classical \cite{wang2010information,raskutti_minimax_2011,loh_corrupted_2012} and quantum \cite{gross_2010} sources.

\topic{While information theory has been successfully used to lower bound learning error, it is rarely applied to guarantee the performance of a learning algorithm.} In this work, we present such an information-theoretic guarantee: Specifically, we show that in a best-case scenario the performance of an optimal learner improves with decreasing conditional entropy  between the given data and a particular discretization of the unknown parameter space. In other words, we study the \textit{onset of learning} by using classical Shannon theory to show that there exists a sufficient amount of information contained in training data such that some unknown parameter may be learned to $\epsilon$-accuracy.

\topic{Similarly to how Shannon theory characterizes correlations between random variables and provides bounds for hypothesis testing, quantum information theory characterizes correlations in quantum states and provides bounds for certain quantum information processing tasks}. One quantity characterizing the quantum correlations contained in a state is its overlap with a maximally entangled state, maximized with respect to local operations. This quantity, the \textit{maximal singlet fraction}, relates to the state's distillable entanglement \cite{bennet_1996a}, entanglement of formation \cite{bennet_1996b}, negativity \cite{Verstraete_2002}, and usefulness as a resource for teleportation \cite{horodecki_general_1999}. Moreover, the maximal singlet fraction is an information-theoretic quantity  that reduces to the optimal success probability for hypothesis testing as a special case \cite{koenig_operational_2009}. Given this relationship between quantum correlations in a finite-dimensional system and classical correlations among discrete random variables,  might we find an entanglement manipulation task for continuous variable quantum systems that similarly generalizes the notion of classical learning? Moreover, given the established relationship between Shannon theory (for discrete variables) and classical learning (of continuous parameters), might we be able to use the finite-dimensional singlet fraction to characterize the ability to succeed at this continuous variable task?

\topic{In this work we consider such a task, study how it may be understood as a quantum generalization of learning, and demonstrate information-theoretic bounds and guarantees for success.} While prior works have applied statistical learning theory \cite{Huang_2021,caro_generalization_2022,peters2022generalization,caro2022outofdistribution,gilfuster2024understanding} or information theory \cite{PRXQuantum.2.040321, huangprl,caro_information_theoretic_2023} to characterize error in quantum machine learning tasks, these prior tasks still involve extracting \textit{classical} information from quantum (and/or classical) systems. Our work moves beyond the typical setting of quantum machine learning to consider an intrinsically quantum task: A learner receives part of an (infinite-dimensional) bipartite system, and then applies local operations to maximize the fidelity of the final state with a highly entangled state of the bipartite system. We derive upper and lower bounds on the learner's error in this task for a best-case and worst-case initial state, respectively. 

\subsection{Preliminaries}\label{sec:preliminaries}

\subsubsection{Classical and quantum hypothesis testing}

\topic{We first review classical hypothesis testing and its relation to information theory before discussing information-theoretic bounds and guarantees for learning problems.} In classical hypothesis testing (or multiple hypothesis testing) a \textit{target} random variable $A$ taking values $a \in \A$ is sampled according to a discrete probability distribution $p_A: \A \rightarrow \R^+$. Then, \textit{observations} are provided in the form of a random variable $B$ taking values in a finite set $\B$ according to probability $p_{B|A}$ conditioned on the value of $A$. The observations may not be a single data point - for instance, we might have that $\B = \X^n \times \Y^n$ represents a sequence of $n$ many $(x, y)$ pairs of a training set. Given the observations $B=b$ conditioned on sampling $A=a$, the objective is to output an \textit{estimate} $\Ahat(B)$ such that the success probability $p(A|\Ahat) :=\Pr_{p_A}(\Ahat(B) = A)$ is large. Fig.~\ref{fig:1}a provides a schematic representation of classical hypothesis testing.

\topic{We are interested in upper and lower bounds on the error probability for this estimate.} Intuitively, the estimate $\Ahat$ is likely to be wrong if the uncertainty about variable $A$ (given $B$) is large. Fano's inequality formalizes this intuition by lower bounding the failure probability for any estimator $\Ahat = \Ahat(B)$ in terms of the conditional entropy between the inputs of that estimator and the target random variable \cite{fano_inequality}:
\begin{equation}\label{eq:fano}
    \HHH(A|B)  \leq h_2\left(p(A|\Ahat ) \right) + (1 - p(A|\Ahat)) \log(|\A|-1).
\end{equation}
Furthermore, the probability of failure for an optimal estimate also is upper bounded according to 
\begin{equation}\label{eq:minent}
    \max_{\Ahat} p(A|\Ahat) \geq 2^{-\HHH(A|B)},
\end{equation}
where the maximization is over all possible estimators $\Ahat: \B \rightarrow \A$. Bounds similar to (and tighter than) Eq.~\ref{eq:minent} have been derived in various works \cite{kovalevsky1968problem,hellman_probability_1970,feder_relations_1994}, and we will see later that this inequality follows immediately from the definition of  \textit{conditional min-entropy}.

\subsubsection{Maximal singlet fraction}\label{sec:singlet_frac}

\begin{figure}[!t]
    \centering
    
    \begin{tikzpicture}
    \node[scale=1.3] (classical) {
    \begin{quantikz}[wire types={c,c}]
    \lstick{$R$}  & &    &    & \\
   \lstick{$A$} & & \gate{\mathcal{N}} & \gate{\mathcal{D}} & \rstick{$\hat{A}$}
    \end{quantikz}
    };
    \node (blabel) at (0.40,-1.5) [font=\Large]{$B$};
    \draw[->] (blabel) -- (0.40, -0.7);

    \node[scale=1.4, right=of classical] (quantum){
    \begin{quantikz} 
    \makeebit[angle=-30]{$\begin{array}{c}
         R  \\
         A
    \end{array}$}  &  &    & \\
    & \gate{\mathcal{N}} & \gate{\mathcal{D}} & \rstick{$\hat{A}$}
    \end{quantikz}
    };
    \node (blabelq) at (7.6,-1.5) [font=\Large]{$B$};
    \draw[->] (blabelq) -- (7.6, -0.7);
     % Annotate text "(a)" at (0,0)
    \node at (-3, 1.5) {(a)};
    \node at (4, 1.5) {(b)};
    % \draw [help lines] (-3, -2) grid (10, 2);

    \end{tikzpicture}%
    \caption{Schematic comparison of classical multiple hypothesis testing versus maximizing singlet fraction. \textbf{(a)} Letting double lines denote discrete classical random variables and boxes denote stochastic maps on the respective distribution, classical hypothesis testing involves a random variable $A$ taking values $a \in \A$ sampled according to $p_{A} \in \mP(\A)$, which results in observations $B$ according to $p_{B|A} = \mN(p_A)$. The learner samples $\hat{A}(B)$ from $\mD(p_{B|A})$ with the goal of maximizing $\Pr(\hat{A} = A)$ with respect to $\mD$. The reference system plays no role in the classical case, as classical information may be freely copied from $A$ to $R$. \textbf{(b)} In the quantum setting, single lines denote Hilbert spaces with dimension equal to the size of the alphabet of the corresponding random variable (e.g. $\mH_A = \C^{|\A|}$). An initial state $|\psi\rangle_{RA} \in \mH_{RA}$ is evolved by a channel $\mN \in \cptp(\mH_A, \mH_B)$, and the goal of the learner is to maximize the singlet fraction $q(R|B) = |\A|\langle \phi| (\I \otimes \mD) \rho_{RB} |\phi\rangle$ with respect to channels $\mD \in \cptp(\mH_B, \mH_{\Ahat})$, where $|\phi\rangle \in \mH_{R \hat{A}}$ is a maximally entangled state and $\rho_{RB} = (\I \otimes \mN)(\dm{\psi})$. In both cases we consider only the non-asymptotic, or \textit{one-shot} setting: Noisy channel coding \cite{shanon1948} does not apply because there is no source coding for $A$, and Schumacher coding \cite{schumacher_coding} does not apply for analogous reasons.}
    \label{fig:1}
\end{figure}
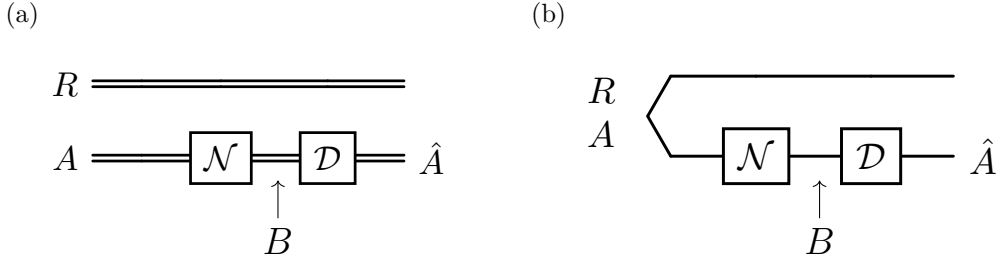%

\topic{We briefly review quantum information theory in finite-dimensional Hilbert spaces.} $\mH$ denotes a finite-dimensional Hilbert space of states with inner product $\langle \psi | \phi\rangle$ denoted using bra-ket notation for $|\psi\rangle, |\phi\rangle \in \mH$ (in the finite-dimensional case, we can choose $\mH = \C^d$ for some $d$). Then, $\D(\mH)$ denotes the \textit{density operators} on $\mH$, i.e. the subset of linear operators $\mL(\mH)$ acting on $\mH$ that are unit trace and positive. We associate a physical system $X$ with a Hilbert space $\mH_X$, and describe a combined system $XY$ in terms of a tensor product of spaces, $\mH_{XY} = \mH_X \otimes \mH_Y$. A linear map $\mN: \mL(\mH) \rightarrow \mL(\mH')$ is positive if it sends positive operators to positive operators, and is completely positive if the map $(\I_n \otimes \mN): (\mL(\mathbb{C}^{n}) \otimes \mathcal{L}( \mH) ) \rightarrow (\mL(\mathbb{C}^{n}) \otimes \mathcal{L}(\mH'))$ is positive for any integer $n$. Then, the set of \textit{channels} from $\mL(\mH)\rightarrow \mL(\mH')$, denoted $\cptp(\mH ,\mH')$,  is the subset completely positive maps which are also trace preserving.

\topic{We now formalize the task of singlet fraction maximization.} Consider a quantum system $RA$ in Figure~\ref{fig:1}b initially containing a pure (possibly entangled) state $|\psi\rangle_{RA} \in \mH_{RA}$, where Hilbert spaces $\mH_R, \mH_A$  have dimension $\dim(\mH_R) = \dim(\mH_A):= d$. Then, the system $A$ evolves via channel $\mathcal{N} \in \cptp(\mH_A, \mH_B)$, and the objective of the learner is to apply an \textit{estimator channel} $\mathcal{D} \in \cptp(\mH_{B}, \mH_{\Ahat})$ (with $\dim(\mH_{\Ahat})= \dim(\mH_A)$) in order to maximize the overlap of the final state with a maximally entangled state. The \textit{singlet fraction} is defined as this overlap, up to the dimensionality of the system:
\begin{align}\label{eq:qra}
    q(R|B)_{\rho} &:= d \, \langle \phi|(\I_R \otimes \mathcal{D})(\rho_{RB}) | \phi\rangle_{R\Ahat},
\end{align}
where 
\begin{align}
    \rho_{RB} &= (\I_R \otimes \mN)(\dm{\psi}_{RA} ) \label{eq:rhoRB}
    \\ |\phi\rangle_{RA} &= \frac{1}{|\A|}\sum_{a \in \A} |a_R\rangle \otimes |a_A\rangle
\end{align}
are the state on $RB$ after the evolution $\mathcal{N}$ and a canonical maximally entangled state respectively. Unlike the classical quantity $p(A|\Ahat)$, $q(R|B)_\rho$ takes values in the interval $[0, d]$. The singlet fraction can be used to characterize the ability of $\mD$ to recover quantum information from noise: If the input state were a singlet, $|\psi\rangle_{RA} = |\phi\rangle_{RA}$, then by the Choi-Jamiołkowski  correspondence \cite{jamiolkowski_linear_1972,choi_completely_1975} the maximal $q(R|B)_{\rho}=d$ is achieved if and only if $\mD \circ \mN = \I$, and if $\mD \circ \mN$ is close to $\I$ (in diamond norm) then $q(R|B)_{\rho}$ will be close to maximal \cite{4920}. However, moving forward we will not require the input state $|\psi\rangle_{RA}$ to be maximally entangled.

\topic{We can express the maximal value of $q(R|B)_{\rho}$ in terms of an information-theoretic quantity called the \textit{conditional min-entropy}.} Given a finite-dimensional bipartite quantum system $\mH_A \otimes \mH_B$, the conditional min-entropy is defined with respect to a state $\rho \in \D( \mH_{AB})$ as:
\begin{align}
    \Hmin(A|B)_\rho &= - \inf_{\sigma \in \D(\mH_B)} \Dmax(\rho || \I \otimes \sigma)
\end{align}
where $\Dmax(\rho || Q) = \inf \{\lambda \in \R: \rho \leq 2^\lambda Q\} $ for a positive operator $Q \geq 0$  (e.g. \cite{watrous2018advanced}). For any state, it holds that $\Hmin(A|B)_\rho~\leq \HHH(A|B)_\rho$. $\Hmin$ has an operational interpretation for both classical distributions and quantum states given by \cite{koenig_operational_2009}: By choosing $\mH_A = \C^{|\mathcal{A}|}$ and $\mH_B = \C^{|\mathcal{B}|}$ and defining a diagonal state $p$ with elements $p_{AB}(a,b)$ on its diagonal, we have
\begin{align}
    2^{-\Hmin(A|B)_p} &=  \max_{\Ahat: \B \rightarrow \A} p(A|\Ahat (B)) \label{eq:minent_1}\\
    2^{-\Hmin(A|B)_\rho} &= \max_{\mathcal{D} \in \cptp(\mH_B, \mH_{\Ahat})} q(R|B)_{\rho} .
\end{align}
Noting that $q(R|B) \leq 1$ for separable states (and therefore classical distributions represented by $p$) \cite[Example 6.16]{watrous2018theory}, we recover the statement of Eq.~\ref{eq:minent} about optimal success probability for classical hypothesis testing from Eq.~\ref{eq:minent_1}

\topic{Just as the error of classical hypothesis testing is both upper and lower bounded in terms of the conditional Shannon entropy, the maximal singlet fraction may be bounded in terms of quantum entropies.} The lower bound on $q(R|B)_{\rho}$ for the optimal estimator becomes\footnote{While the guarantee for classical hypothesis testing is a special case of the conditional min-entropy in the same way that linear programming is a special case of semi-definite programming, it is interesting that the former was derived in Ref.~\cite{feder_relations_1994} by different means entirely.} \cite{koenig_operational_2009}:
\begin{equation}
    \max_{\mathcal{D} \in \cptp(\mH_B, \mH_{\Ahat})} q(R|B)_{\rho} = 2^{-\Hmin(R|B)_\rho} \geq 2^{-\HHH(R|B)_\rho}.
\end{equation}
Furthermore, we may upper bound $q(R|B)_{\rho}$ using an inequality that is superficially similar to Fano's inequality. The following represents a family of such inequalities: 
\begin{proposition}[Quantum Fano inequality \cite{quantumfano}]\label{prop:qfano} 
    For any pure state $|\psi\rangle \in \mH_{RA}$ and any $\rho \in \D(\mH_{RB})$ with $\dim(B)=\dim(A) = d$, define $p = \langle \psi | \rho | \psi \rangle$. Then,
    \begin{equation}
        \HHH(RB)_\rho \leq h_2(p) + (1-p) \log(d^2 - 1)
    \end{equation}
\end{proposition}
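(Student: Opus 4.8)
The plan is to reduce the quantum statement to the classical Fano grouping argument by dephasing $\rho$ in a carefully chosen basis. Since $\dim(A)=\dim(B)=d$ we identify $\mH_A \cong \mH_B \cong \C^{d}$, so that $|\psi\rangle$ may be regarded as a unit vector in $\mH_{RB}$ and the overlap $p=\langle\psi|\rho|\psi\rangle$ is well defined. The ambient space $\mH_{RB}$ has dimension $d^2$, which is exactly the source of the $\log(d^2-1)$ term and the analogue of the $\log(|\A|-1)$ appearing in the classical inequality of Eq.~\ref{eq:fano}.

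First I would extend $|\psi\rangle$ to an orthonormal basis $\{|f_1\rangle=|\psi\rangle,\,|f_2\rangle,\dots,|f_{d^2}\rangle\}$ of $\mH_{RB}$ (always possible in finite dimension) and consider the pinching channel $\Delta(\cdot)=\sum_{i}|f_i\rangle\langle f_i|(\cdot)|f_i\rangle\langle f_i|$. The key step, which replaces the data-processing step in the classical proof, is that $\Delta$ is unital and therefore cannot decrease the von Neumann entropy. Concretely, because $\log\Delta(\rho)$ is diagonal in the $\{|f_i\rangle\}$ basis one gets the identity $\HHH(\Delta(\rho))=\HHH(\rho)+\D(\rho\,\|\,\Delta(\rho))$, where $\D(\rho\,\|\,\Delta(\rho))$ is the relative entropy; nonnegativity of relative entropy then yields $\HHH(RB)_\rho=\HHH(\rho)\le \HHH(\Delta(\rho))$.

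Second, I would observe that $\Delta(\rho)$ is a classical (diagonal) state whose spectrum is the distribution $q_i=\langle f_i|\rho|f_i\rangle$ on $d^2$ outcomes, with the distinguished outcome carrying weight $q_1=p$. Hence $\HHH(\Delta(\rho))=-\sum_i q_i\log q_i$, and it only remains to bound this Shannon entropy by the standard grouping computation: splitting off the $q_1=p$ term and writing the remaining weights as $(1-p)$ times a probability distribution $r$ on $d^2-1$ outcomes gives $-\sum_i q_i\log q_i = h_2(p)+(1-p)\HHH(r)$, after which $\HHH(r)\le \log(d^2-1)$ follows from maximality of the uniform distribution. Chaining the inequalities gives $\HHH(RB)_\rho \le \HHH(\Delta(\rho)) \le h_2(p)+(1-p)\log(d^2-1)$, as claimed.

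The main obstacle is conceptual rather than computational: it is recognizing the correct reduction, namely that dephasing in a basis \emph{containing} $|\psi\rangle$ simultaneously (i) can only increase the entropy we wish to upper bound, and (ii) produces precisely the classical distribution to which the usual Fano grouping argument applies. Once the unitality/entropy-monotonicity fact is in hand, everything else is the elementary classical estimate. The only minor points to check carefully are that the identification $\mH_A\cong\mH_B$ is what makes $p$ meaningful, and that completing $|\psi\rangle$ to an orthonormal basis of the $d^2$-dimensional space is what produces the correct dimensional factor.
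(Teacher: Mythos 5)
The paper does not actually prove Proposition~\ref{prop:qfano}; it imports it from Ref.~\cite{quantumfano}, so there is no in-paper argument to compare against. Your proof is correct and is the standard one for the quantum Fano inequality: complete $|\psi\rangle$ to an orthonormal basis of $\mH_{RB}$, use the fact that pinching in that basis cannot decrease the von Neumann entropy (your identity $\HHH(\Delta(\rho))=\HHH(\rho)+D(\rho\,\|\,\Delta(\rho))$ together with nonnegativity of relative entropy is exactly the right justification), and then apply the classical grouping estimate $-\sum_i q_i\log q_i = h_2(p)+(1-p)\HHH(r)\le h_2(p)+(1-p)\log(d^2-1)$ to the resulting diagonal distribution. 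The one point worth flagging is that your dimension count $\dim(\mH_{RB})=d^2$ silently assumes $\dim(\mH_R)=d$; the proposition as stated only fixes $\dim(A)=\dim(B)=d$, and for a larger reference the same argument would give $\log(\dim(\mH_R)\,d-1)$ unless one first restricts to a $d^2$-dimensional subspace containing $|\psi\rangle$ and the support of $\rho$. Since $\dim(\mH_R)=d$ is the paper's standing convention (Sec.~1.1.2), this is a presentational rather than a substantive gap.
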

This inequality was originally derived in Ref.~\cite{quantumfano} to relate the \textit{entropy exchange} (or entropy of environment plus reference system) to the \textit{entanglement fidelity} (overlap of the noisy state $\rho = (\I_R \otimes \mN)(\dm{\psi})$ with initial state $|\psi\rangle$). While that choice leads to a relatively tight inequality that is relevant for, e.g. quantum error correction, it does not provide a bound on singlet fraction. Consider instead the following instantiation of Proposition~\ref{prop:qfano}:
\begin{equation}
    \HHH(RB)_\rho \leq h_2\left(\frac{q(R|B)_{\rho}}{d}\right) + \left(1-\frac{q(R|B)_{\rho}}{d}\right) \log(d^2 - 1)
\end{equation}
where $\rho=\rho_{RB}$ is defined in Eq.~\ref{eq:rhoRB}. In this setting, the above inequality provides an analogous bound for maximizing singlet fraction as Fano's inequality provides for classical hypothesis testing. $\HHH(RB)_\rho$ appears here rather than $\HHH(R|B)_\rho$ since this bound is derived from a spectral decomposition of the bipartite system $RB$, which does not relate to the conditional entropy in any general sense.

\topic{Between the tasks of classical hypothesis testing and maximizing singlet fraction is \textit{quantum hypothesis testing}, which describes the scenario in which $A$ represents a discrete random variable and $B$ represents a quantum system (e.g. \cite{PhysRevLett.98.160501,audenaert_asymptotic_2008})}. In this scenario, conditional on $A=a$ the learner receives a quantum state $\sigma_B^a \in \D(\mH_B)$ and the learner's task is to determine $a$ by performing measurements on the quantum system. This scenario can be characterized via a joint classical-quantum system $AB$ prepared in a state $\rho= \sum_a p_A(a)\dm{a} \otimes \sigma_B^a$, such that $\Hmin(A|B)_\rho$ determines the learner's optimal probability of identifying $A$ using measurements on system $B$ \cite{koenig_operational_2009}.

\subsubsection{The reduction from singlet fraction to classical hypothesis testing}\label{sec:cht_reduction}
\topic{We briefly review how maximizing singlet fraction reduces to the objective of classical hypothesis testing.} Our main tool will be a \textit{completely dephasing map} $\Delta \in \cptp(\mH)$ defined with respect to a fixed basis $\{|i\rangle\} \subset \mH$ by its action $\Delta(\rho) = \sum_i \dm{i} \langle i| \rho |i\rangle$. Applying this map to both subsystems of the state $\rho_{RB}$ reduces the singlet fraction to the probability of success for a classical estimator:
\begin{align}\label{eq:finite_dephase}
    \langle \phi_{R\Ahat}, (\I \otimes \mathcal{D} ) \circ ( \Delta \otimes \Delta) (\rho_{RB})   \rangle &= \langle  (\Delta \otimes \I)(\phi_{R\Ahat}), (\I \otimes \mathcal{D}\circ   \Delta)(\rho_{RB})\rangle 
    \\&= \frac{1}{d} \sum_{r, b} p_{RB}(r, b) \langle r| \mathcal{D} (\dm{b})|r\rangle
\end{align}
where $p_{RB}(r, b) := \langle rb | \rho | rb\rangle$ and we have used the self-adjointedness of $\Delta$, $\langle X, \Delta(Y)\rangle = \langle \Delta (X), Y\rangle$. Fixing the values $r, b$ and maximizing the final system over estimator channels gives, 
\begin{equation}
    \max_{\mathcal{D} \in \cptp(\mH_B, \mH_{\Ahat})} \langle r| \mathcal{D} (\dm{b})|r\rangle = \max_{\mathcal{D} \in \cptp(\mH_B, \mH_{\Ahat})} \langle r| \Delta \circ \mathcal{D} \circ\Delta(\dm{b})|r\rangle  := \max_{\Ahat: \B\rightarrow \A} \Pr(\Ahat(B) = r|B=b).
\end{equation}
Writing $\Pr(\Ahat(B) = r|B=b) := p_{\Ahat|B}(r| b)$ describing the conditional probability of a measurement outcome $r$, we recover 
\begin{align}
    \sum_{r,b}  p_{RB}(r, b) p_{\Ahat|B}(r| b) &= \sum_{r, b} p_{RB}(r, b) p_{\Ahat | RB}(r | r, b) 
    = \sum_{r} p_{R \Ahat}(r, r) = \Pr_{p_R} (\hat{A}= R),
\end{align}
where we have used conditional independence of $\hat{A}$ and $R$ given $B$. Combining our results, 
\begin{equation}
    \max_{\mathcal{D} \in \cptp(\mH_B, \mH_{\Ahat})} d \,\langle \phi_{R\Ahat}, (\I \otimes \mathcal{D} ) \circ ( \Delta \otimes \Delta) (\rho_{RB})   \rangle = \max_{\Ahat: \B\rightarrow \A} \Pr_{p_R} (\hat{A}= R).
\end{equation}
\topic{We see that a singlet fraction maximization problem is naturally associated with, or \textit{reduces to}, a classical hypothesis testing problem via application of a completely dephasing channel.} Moving forward, when we introduce our quantum information processing task that reduces to classical learning it will be this kind of reduction that we seek.

\section{Bounds and guarantees for classical learning}\label{sec:classical}

\topic{An essential task in the theory of machine learning is to understand how well an algorithm might perform on some given learning problem. We now formalize the learning problem that we will study, before connecting classical learning to a more general quantum information processing task}.  We begin with preliminaries: Given a set $\X$, we define the set of probability densities $\mP(\X)$. Then, for a probability density $p\in \mP(\X)$ we let $\E_p[f(x)]$ denote the expected value of $f$ with respect to $p$ and define the \textit{indicator function} $\I\{x \in S\}$ which is equal to one if $x \in S\subseteq \X$ and zero otherwise. We write the probability of some event $\omega \subseteq \X$ with respect to $p$ as $\Pr_p (\omega):= \E_p[\I\{x \in \omega \}]$.

To define our learning problem, we assume that nature has chosen some fixed \textit{target parameter} $\alpha \in \Ac$, where $\Ac$ is a bounded subset of some metric space $X$ with metric $d: X \times X \rightarrow \R^+$. We choose $X$ to be a metric space so that we can judge estimates for $\alpha$ in terms of the distance $d$ (for simplicity, the reader may consider $X$ to be a normed vector space, e.g. $d(x,y) = \norm{x - y}_2$ for $x,y \in \mathrm{A} \subset \R^p$). We then assume that \textit{observations} are described by a random variable $B$ taking values in a set $\B$ according to probability $p_{B|A}$ conditional on the choice of $\alpha$. We need not specify additional properties of $\B$ for now, but as mentioned before the observations $B$ might represent a training dataset for supervised machine learning, so that $\B = \X^n \times \Y^n$ is a sequence of $n$ pairs of data points from $\X$ and labels from $\Y$ and $B=\{(X_1, Y_1), \dots, (X_n, Y_n)\}$. Equivalently, the distribution $p_{B|A}$ may be considered as the output of a stochastic map $\mN$ acting on $p_A$. Given observations $B$, the learner tries to estimate $\alpha$ using an \textit{estimator} $\hat{\alpha}: \B \rightarrow \Ac$. The learner's estimate is scored according to its distance from the true $\alpha$. We will specify a scoring function later, but the essence is that the learner tries to minimize $d(\hat{\alpha}(B), \alpha)$.

\topic{The problem we study here differs from some other common models of learning.} This learning task mostly resembles parameter estimation, and differs from \textit{prediction} tasks common in machine learning where the learner's task is to construct a model that approximates an unknown target function with low error, with respect to the distribution of data. Probably Approximately Correct (PAC) learning is an example of a formal learning model for prediction tasks \cite{valiant1984theory}. In PAC learning, an unknown function $f_\alpha: \X \rightarrow \Y$ is prepared and the goal of the learner is to output another function $h:\X\rightarrow \Y$ such that the error $\Pr_{x \sim p_X}(h(x) \neq f_\alpha(x))$ is small, with high probability. For instance, in a linear least squares problem the learner receives observations $B = \{(x_1, f_\alpha(x_1) + \epsilon_1), \dots, (x_n, f_\alpha (x_n) + \epsilon_n)\}$ with $x_i \sim p_X(x)$ and zero-mean error $\epsilon_i$, and attempts to approximate  $f_\alpha(x) := \langle \alpha, x\rangle$ with small error with respect to $p_X$. Instead, in our learning setting the learner's goal is to output $\hat{\alpha}$ close to $\alpha$, a task which may generally be harder than PAC learning of $f_\alpha$.

\subsection{Classical learning lower bound via Fano method}

\topic{We first review a well-known lower bound on the error of a learning task, based on Fano's inequality.} This bound works by considering a discretized version of the learning task presented above, such that learning $\alpha$ within $\epsilon$-error is harder than some hypothesis testing problem, which in turn has a success probability upper bounded by Fano's inequality. One kind of discretization that achieves this is an \textit{$\epsilon$-packing}:

\begin{definition}{$\epsilon$-packing:}
    $\mathcal{V}:= \mathcal{V}(\epsilon, \Ac, d)\subset \Ac$ is an $\epsilon$-packing of $\Ac$ in $d$ if for every $\alpha_v, \alpha_{v'} \in \mathcal{V}$, $d(\alpha_v, \alpha_{v'}) \geq \epsilon$.
\end{definition}
Since $\Ac$ is bounded, $|\mathcal{V}| < \infty$. We will assess the error of the learner's estimate with some non-decreasing loss function $\ell: \R^+ \rightarrow \R$, where smaller $\ell$ means a better estimate. Continuing with the previous example with $\Ac \subset \R^p$, choosing $d(\alpha, \alpha') := \norm{\alpha - \alpha'}_2$ and $\ell(x) = x^2$ corresponds minimizing squared error. Then, achieving optimal performance on the learning task turns out to be no harder than performing classical hypothesis testing for the nearest point to $\alpha$ in a $2\epsilon$-packing of $\Ac$, as the following bound from the literature shows:
\begin{proposition}[Classical lower bound, \cite{hasminskii1981}]\label{prop:minimax}
Fix $\epsilon>0$ and let $\mathcal{V}:= \mathcal{V}(2\epsilon, \Ac, d)$ be an $\epsilon$-packing of $\Ac$ in $d$. Define a random variable $V$ uniformly distributed on $\{1, 2, \dots, |\mathcal{V}|\}$, and $\hat{V}: \B \rightarrow \{1, \dots, |\mathcal{V}|\}$ as an estimator for the element of $\mathcal{V}$ closest to $\alpha$ given observations $B$:
\begin{equation}\label{eq:Vhat}
    \hat{V}(r) = \{v: \argmin_{\alpha_{v'} \in \mV} d(\alpha_{v'}, r) = \alpha_v\}.
\end{equation}
Then the following \textit{minimax} lower bound holds:
\begin{equation}
    \min_{\hat{\alpha} } \max_{\alpha \in \Ac } \underset{{p_{B|\alpha}}}{\E} \left[ \ell(d(\hat{\alpha}(B), \alpha))\right] \geq \ell(\epsilon)  \left(  \frac{\HHH(V|B) - 1}{\log|\mathcal{V}|} \right).
\end{equation}
For completeness, we provide a standard proof of this statement in Appendix~\ref{app:minimax}.
\end{proposition}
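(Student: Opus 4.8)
The plan is to reduce the minimax estimation risk to the error probability of a multiple hypothesis test over the packing $\mathcal{V}$, and then invoke Fano's inequality (Eq.~\ref{eq:fano}) exactly as it was stated for classical hypothesis testing. Concretely, I would fix an arbitrary estimator $\hat{\alpha}$ and build the induced test $\hat{V}(B) = \argmin_{v} d(\alpha_v, \hat{\alpha}(B))$ from Eq.~\ref{eq:Vhat}, so that a bound on the worst-case risk of $\hat{\alpha}$ follows from a bound on the error probability of $\hat{V}$.

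The reduction proceeds in three moves. First, since $\mathcal{V} \subseteq \Ac$, the supremum over $\Ac$ dominates the uniform average over the packing points, giving for every $\hat{\alpha}$:
\begin{equation}
\max_{\alpha \in \Ac} \underset{p_{B|\alpha}}{\E}\left[\ell(d(\hat{\alpha}(B),\alpha))\right] \geq \frac{1}{|\mathcal{V}|}\sum_{v}\underset{p_{B|\alpha_v}}{\E}\left[\ell(d(\hat{\alpha}(B),\alpha_v))\right].
\end{equation}
Second --- and this is the crux --- I would show that whenever the induced test errs, the loss is at least $\ell(\epsilon)$. If $\hat{V}(B) = v' \neq v$ then $d(\alpha_{v'}, \hat{\alpha}(B)) \leq d(\alpha_v, \hat{\alpha}(B))$, so the $2\epsilon$-separation of the packing together with the triangle inequality yield $2\epsilon \leq d(\alpha_v, \alpha_{v'}) \leq 2\, d(\alpha_v, \hat{\alpha}(B))$, hence $d(\hat{\alpha}(B),\alpha_v) \geq \epsilon$. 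Because $\ell$ is non-decreasing, $\ell(d(\hat{\alpha}(B),\alpha_v)) \geq \ell(\epsilon)\,\I\{\hat{V}(B) \neq v\}$ holds pointwise in $B$. Taking expectations and averaging over the uniform $V$ then converts the right-hand side above into $\ell(\epsilon)\,\Pr(\hat{V} \neq V)$, the test error probability under the joint law of $(V,B)$.

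Third, I would apply Fano's inequality to the pair $(V, B)$. Writing $P_e = \Pr(\hat{V} \neq V) = 1 - p(V|\hat{V})$ and using Eq.~\ref{eq:fano} with $h_2(P_e) \leq 1$ and $\log(|\mathcal{V}|-1) \leq \log|\mathcal{V}|$ gives $\HHH(V|B) \leq 1 + P_e\log|\mathcal{V}|$, i.e. $P_e \geq (\HHH(V|B)-1)/\log|\mathcal{V}|$. The essential point that makes the argument go through is that $\HHH(V|B)$ depends only on the fixed joint distribution induced by the uniform prior on $V$ and the channel $p_{B|A}$, not on the estimator; so the lower bound on $P_e$ is uniform over all $\hat{\alpha}$, and chaining the three steps and then taking the minimum over $\hat{\alpha}$ on the left leaves the estimator-free right-hand side intact. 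I expect the main obstacle to be the second step: getting the constants right in the triangle-inequality reduction (the factor of two from the $2\epsilon$-packing is exactly what produces the $\ell(\epsilon)$ on the right), and being careful that the monotone-loss bound is established pointwise in $B$ before any expectation is taken, so that the order of the minimum over $\hat{\alpha}$ and the maximum over $\alpha$ is respected throughout.
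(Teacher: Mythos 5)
Your proposal is correct and follows essentially the same route as the paper's Appendix~\ref{app:minimax}: reduce the worst-case risk to the error probability of the induced nearest-point test $\hat{V}$ over the $2\epsilon$-packing (your triangle-inequality step is just the contrapositive of the paper's ``$d(\hat{\alpha},\alpha_v)<\epsilon$ implies $\hat{V}=v$''), then apply Fano's inequality with $h_2\leq 1$ and $\log(|\mathcal{V}|-1)\leq\log|\mathcal{V}|$. The only difference is cosmetic ordering of the max-to-average and loss-truncation steps, plus your making explicit the final Fano application that the appendix leaves to the main text.
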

This bounds the accuracy of an optimal estimator outputting the hardest-to-learn element in $\Ac$. Thus, the bound describes the best learning performance (with respect to learning algorithms) in a worst-case scenario. Some intuition behind this bound: we can tune the bound's sensitivity to error by choosing a larger or smaller $\epsilon$. If $\epsilon$ is made very large, then $|\mathcal{V}|=1$ is the largest packing of $\Ac$, in which case the corresponding hypothesis testing task is trivial but the error of the estimate $\hat{\alpha}$ might be intolerably large. On the other hand, a small $\epsilon$ results in a large $|\mathcal{V}|$ and the error lower bound becomes vacuous. In general, this bound states that a mutual information of roughly $I(V\mi B) \approx \HHH(V)$ bits between the observations $B$ and the index of the $\epsilon$-ball containing $\alpha$ is necessary for \textit{any} estimator to succeed, but does not provide any statement as to whether such a condition is sufficient for success. Ref.~\cite{meyer2023quantum:QIP:24} derived a similar bound to Proposition~\ref{prop:minimax} for when observations $B$ consist of quantum states, by reducing the corresponding learning problem to quantum hypothesis testing on an appropriately defined discretization.

\subsection{Classical learning upper bound}

\topic{In the previous section we saw how a learning problem may be discretized into a hypothesis testing problem for determining a discrete random variable $V$ such that about $\HHH(V)$ bits of mutual information $I(V\mi B)$ between the learner's observations and $V$ are necessary to succeed.} Here we will derive an amount of mutual information that is sufficient for an optimal learner to learn \textit{something} about the desired variable. The guarantee works by discretizing the learning task differently than before, such that learning within $\epsilon$-error is harder than a corresponding classical hypothesis testing problem, which in turn has a success probability lower bounded according to Eq.~\ref{eq:minent}. In this case, the relevant discretization involves an \textit{$\epsilon$-net}:
\begin{definition}{$\epsilon$-net:}
    $\mW:= \mathcal{W}(\epsilon, \Ac, d) \subset \Ac$ is an $\epsilon$-net of $\Ac$ in $d$ if for every $\alpha \in \Ac$, there exists $\alpha_w \in \mathcal{W}$ such that $d(\alpha, \alpha_w) \leq \epsilon$.
\end{definition}
To make our discretization arguments using the $\epsilon$-net, we will introduce the notion of an $\epsilon$-\textit{covering partition}, which divides the subset $\Ac$ into regions consisting of all points that are closest to any given $\alpha_w \in \mW$:
\begin{definition}[$\epsilon$-covering partition]\label{def:ecp}
    Let $\Ac$ be a bounded subset of a metric space and $\mW:= \mW(\epsilon, \Ac, d)$ be an $\epsilon$-net of $\Ac$ in $d$. The set of closest points in $\mW$ to $r$ is 
    \begin{equation}
        \hat{W}(r) \:= \{w: \argmin_{\alpha_{w'} \in \mW} d(\alpha_{w'}, r) = \alpha_w\}
    \end{equation}
    Then we define an $\epsilon$-covering partition of $\Ac$ with respect to $\mW$ as
    \begin{align}
        C(\Ac, \mW) &= \{R_w: w=1, \dots, |\mW|\},
        \\ \text{ where }R_w &= \bigl\{r \in \Ac: \hat{W}(r) \ni w \bigr\}.
    \end{align}
\end{definition}
$C(\Ac, \mW)$ forms a partition of $\Ac$, since for all $R_w, R_{w'} \in C(\Ac, \mW)$, $R_w \cap R_{w'} = 0$ for $w \neq w'$ and $\cup_{w =1\dots|\mW|} R_w = \Ac$ with ties broken arbitrarily, and the topology (e.g. openness) of any given subset $R_w$ may also be assigned arbitrarily. An important property of $C(\Ac, \mW)$ is that any estimator that outputs $w$ for all $\alpha \in R_w$ is guaranteed to be $\epsilon$-accurate. In this setting, the learner attempts to maximize a non-increasing \textit{score} $\score: \R^+ \rightarrow \R^+$ that potentially depends nonlinearly on the distance $d$. For example, choosing $\score(t) = c - t$ (with $c$ being some large-enough constant like the diameter of $\Ac$) means that maximizing $\score (d(x, y))$ corresponds to minimizing the distance between elements $x$ and $y$ (larger $\score$ means a better estimator). We may now state the guarantee for the above learning problem:

\begin{proposition}[Classical learning upper bound]\label{prop:maximax}
Let $\mW:= \mW(\epsilon, \Ac, d)$ be an $\epsilon$-net of $\Ac$ in $d$ and define a random variable  $W$ uniformly distributed on $\mathcal{W}(\epsilon, \Ac, d)$, and let $\score:\R^+ \rightarrow \R^+$ be a non-increasing strictly positive function used to score the accuracy of the estimator. Then, as shown in Appendix~\ref{app:maximax} we have,
\begin{equation}
       \max_{\hat{\alpha}} \max_{\alpha \in \mathrm{A}} \log  \E_{p_{B|\alpha}} [\score(d(\hat{\alpha}(B), \alpha))] \geq \log\score(\epsilon) -\HHH(W|B).
\end{equation}
\end{proposition}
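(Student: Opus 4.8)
The plan is to prove this best-case guarantee as a near mirror-image of the Fano-based lower bound in Proposition~\ref{prop:minimax}, replacing the $\epsilon$-packing by the $\epsilon$-net $\mW$ and replacing Fano's inequality by the conditional min-entropy guarantee of Eq.~\ref{eq:minent}. The governing idea is that the two outer maxima on the left-hand side let us choose both the target and the estimator in our favor: I will take the targets to range over the net points $\{\alpha_w\}_{w} \subseteq \Ac$ and the estimator to be the one induced by the optimal hypothesis tester for the net index.

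First I would replace the maxima by an average over the net. Since $\mW \subseteq \Ac$, restricting the outer maximum to the net and then bounding a maximum over a finite set below by its uniform average gives, for any fixed estimator $\hat\alpha$,
\begin{equation}
    \max_{\alpha \in \Ac} \E_{p_{B|\alpha}}[\score(d(\hat\alpha(B), \alpha))] \;\geq\; \E_{W}\,\E_{p_{B|\alpha_W}}\!\left[\score(d(\hat\alpha(B), \alpha_W))\right],
\end{equation}
where $W$ is uniform on $\mW$ and the right-hand side is taken over the joint law $p_{WB}(w,b) = |\mW|^{-1} p_{B|\alpha_w}(b)$ --- precisely the distribution with respect to which $\HHH(W|B)$ is evaluated.

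Next I would fix the estimator. Let $\hat{W}: \B \to \{1,\dots,|\mW|\}$ be an optimal hypothesis tester for $W$ given $B$, so that Eq.~\ref{eq:minent} gives $\Pr(\hat{W}(B) = W) \geq 2^{-\HHH(W|B)}$, and set $\hat\alpha(B) = \alpha_{\hat{W}(B)}$; since the left-hand side maximizes over all estimators, this choice only weakens the inequality. Splitting the expectation over the event $\{\hat{W}(B) = W\}$ and its complement, strict positivity of $\score$ lets me discard the non-negative failure term, while on the success event the $\epsilon$-covering partition of Definition~\ref{def:ecp} certifies $d(\hat\alpha(B), \alpha_W) \leq \epsilon$, so that monotonicity of $\score$ yields $\score(d(\hat\alpha(B),\alpha_W)) \geq \score(\epsilon)$. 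Hence
\begin{equation}
    \E_{W}\,\E_{p_{B|\alpha_W}}\!\left[\score(d(\hat\alpha(B), \alpha_W))\right] \;\geq\; \score(\epsilon)\,\Pr(\hat{W}(B) = W) \;\geq\; \score(\epsilon)\,2^{-\HHH(W|B)}.
\end{equation}

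Finally I would assemble the chain: both outer maxima only increase the quantity, and since $\log$ is increasing it commutes with the maxima, so taking logarithms of $\score(\epsilon)\,2^{-\HHH(W|B)}$ produces the desired $\log\score(\epsilon) - \HHH(W|B)$. I do not expect a deep obstacle; the argument is short once the reduction is set up. The steps requiring care are all bookkeeping: confirming that $\HHH(W|B)$ is computed against exactly the induced joint law $p_{WB}$, that the constructed $\hat\alpha$ is admissible in the outer maximum, that discarding the failure term is legitimate (it is, because $\score > 0$), and that the success event genuinely certifies $\epsilon$-accuracy through Definition~\ref{def:ecp} rather than merely identifying the correct index. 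The one conceptual point --- and the analogue of the delicate step in the lower bound --- is recognizing that taking the best case over $\alpha$ is what licenses restricting to the net ensemble, so that the guarantee is naturally phrased in terms of the net's conditional entropy.
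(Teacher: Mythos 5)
Your proposal is correct and follows essentially the same route as the paper's Appendix~\ref{app:maximax} proof: restrict the best-case target to the $\epsilon$-net, pass from the maximum to the uniform average over net points, reduce $\epsilon$-accurate estimation to hypothesis testing for the net index, and invoke the min-entropy guarantee of Eq.~\ref{eq:minent} together with $\Hmin(W|B)\leq \HHH(W|B)$. The only (harmless) difference is directional --- you build $\hat{\alpha}$ from the optimal discrete tester $\hat{W}$, whereas the paper builds a discrete estimator from an arbitrary $\hat{\alpha}$ via the nearest-net-point map; with your construction the success event even gives $d(\hat{\alpha}(B),\alpha_W)=0$, so the appeal to Definition~\ref{def:ecp} is not actually needed there.
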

In contrast with Proposition~\ref{prop:minimax}, this bound describes the optimal performance on a learning algorithm (with respect to estimators $\hat{\alpha}$) in a best-case scenario, i.e. the easiest-to-learn $\alpha$. We might describe this as the onset of learning: Whenever this bound is nontrivial, then it must be true that \textit{something} may be learned given the right observations $B$. The logarithm on the LHS of Proposition~\ref{prop:maximax} (compared to Proposition~\ref{prop:minimax}) appears because Eq.~\ref{eq:minent} is loose, compared to Fano's inequality which is sharp \cite{cover1999elements}. This is easily rectified by substituting a tight bound (Ref.~\cite{feder_relations_1994}, Theorem 1) at the expense of simplicity.

\topic{This classical upper bound provides insight into the relationship between learning and classical correlations.} In Proposition~\ref{prop:maximax}, a learner is successful if they are able to output $\hat{\alpha}$ within $\epsilon$ of the target variable $\alpha$ when provided observations $B|\alpha$. The maximum probability that this is satisfied (with respect to learning algorithms $\hat{\alpha}$) is,
\begin{equation}\label{eq:class_wit}
    \max_{\hat{\alpha}}\Pr_{p_{\hat{\alpha}|\alpha}} \left( d(\hat{\alpha}, \alpha) \leq \epsilon \right) := \max_{\hat{\alpha}}\underset{p_{\hat{\alpha}|\alpha}}{\E}  \bigl[ \I \{d(\hat{\alpha}, \alpha) \leq \epsilon \}\bigr] =  \max_{\hat{\alpha}}\underset{p_{\hat{\alpha}|\alpha}}{\E} \bigl[ \I \{\hat{\alpha} \in \mathbb{B}_\epsilon(\alpha) \}\bigr], 
\end{equation}
where $\mathbb{B}_\epsilon(\alpha) = \{\alpha' \in \Ac: d(\alpha, \alpha') \leq \epsilon\}$ is the intersection of the $\epsilon$-ball centered at $\alpha$ with $\Ac$. The final expression can be interpreted as the expected value of a witness for classical correlation between two continuous variables $\alpha$, $\hat{\alpha}$, i.e. the classical learner is successful if their estimate $\hat{\alpha}$ results in a large expected value of the operator $\I \{\hat{\alpha} \in \mathbb{B}_\epsilon(\alpha) \}$. 

\section{A quantum generalization of learning}

\topic{A primary challenge for quantum machine learning is to identify tasks for which quantum information processing might provide an advantage over classical methods.} Accordingly, a large body of work has been devoted to proposing and studying settings where quantum computers could offer an advantage in learning from classical data \cite{huang2021power,kubler2021inductive, Liu_2021,Gyurik2022towardsquantum,Huang_2022,gyurik2023exponential} and for characterizing quantum states and processes \cite{Khatri2019quantumassisted,nfl_2022,Caro_2023v,caro2023learning,huang_arbitrary,jerbi2023power}. Some of these tasks involve data that are provided in superposition and may be manipulated coherently by the learner, such as the Harrow–Hassidim–Lloyd algorithm for linear systems of equations \cite{hhl_2009}. In particular, the model of quantum PAC learning \cite{bj99} generalizes PAC learning by providing a learner access to observations (training data) in coherent superposition generated via an unknown concept (target function), but otherwise the learner's goal is still to approximate the unknown concept to high accuracy with respect to the distribution of observations. Quantum learners in the quantum PAC setting have been shown to provide advantages in sample complexity for specific learning tasks \cite{qpac1,qpac2}, though this advantage can disappear under more general conditions \cite{arunachalam2018optimal}. 

\topic{In order to reason about the relationship between entanglement and learning, we first consider an example of a formal learning scenario similar to the classical learning task introduced above.} The framework of \textit{exact learning} involves an unknown concept described by a discrete variable, and the learner must identify this variable \textit{exactly} with high probability given a set of observations \cite{exact_learning}. The generalization to quantum exact learning \cite{ambainis_quantum_2004,kothari2014}, in which the learner has access to observations described by a quantum state, will help us reason about how we might generalize our classical learning task involving continuous variables to a quantum task involving infinite dimensional states.

\subsection{An example: Beyond quantum exact learning}\label{sec:qexact}

\topic{Fundamentally, despite providing a learner with quantum data, the quantum PAC and quantum exact learning settings described above still involve extracting classical information about an unknown target function.} This limitation suggests a more general setting in which the target functions themselves are treated as \textit{quantum data} that exist in coherent superposition. To demonstrate this concept in a limited (i.e. finite dimensional) setting, we will consider a generalization of classical (and quantum) exact learning that mirrors the generalization of classical learning that we will soon introduce.

\topic{We review the setting of classical exact learning in more detail.} In this setting, without loss of generality, a concept (boolean function with bitstring inputs) $c_A$ is sampled from a concept class $\mathcal{C} \subseteq \{c_a: \{0,1\}^n \rightarrow \{0, 1\} \}$ with some probability $p_A(a)$. Once $A=a$ is fixed, a learner is given oracle access for evaluating $c_a(x)$ for input bitstrings $x\in\{0,1\}^n$, and tries to determine $c_a$ with high probability with resepct to a distribution of concepts $c$. Specifically, a successful exact learner outputs a hypothesis $h: \{0,1\}^n \rightarrow \{0,1\}$ such that $h(x) = c(x)$ for all $x \in\{0,1\}^n$ with probability at least 2/3. Considering Figure~\ref{fig:1}a, this scenario is modelled by introducing a random variable $A$ taking values $a \in \A:= \{1, \dots, |\mathcal{C}|\}$ with probability $p_A(a)$, and then describing the learner's $m$ queries via a random variable $B$ sampled from $p_{B|A} = \mathcal{N}(p_A)$ conditioned on $A$:
\begin{equation}
    B = \{(x_1, c_A(x_1)), \dots, (x_m, c_A(x_m))\},
\end{equation}
where each $x_i$ is sampled from some distribution $p_X$, which can be completely described by $\mN$. For consistency of notation, we describe the joint distribution of $(A, B)$ as a bipartite classical probability distribution (diagonal density operator) of the form
\begin{equation}\label{eq:class_exact}
    p_{AB} = \sum_{a \in \A} p_A(a) \dm{a}_A \otimes p_B^a
\end{equation}
where $p_B^a = \sum_{b \in \B} p_{B|A}(b|a) \dm{b}_B$ is a conditional distribution. In this case, the learner's strategy to identify $c_a$ given $\{(x_i, c_a(x_i))\}_{i=1}^m$ is equivalent to finding an estimator $\Ahat = \Ahat(B)$ that maximizes $\Pr_{p_A}(\Ahat=A)$. Then, for any fixed choice $A=a$ of target function, sampling the random variable $B$ corresponds to preparing a size-$m$ classical dataset of $(x, c_a(x))$ pairs, and we may view exact learning as a kind of classical hypothesis testing for the random variable $A$.

\topic{Quantum exact learning generalizes exact learning by providing samples in superposition via an oracle $|x, 0\rangle \rightarrow |x, c_A(x)\rangle$, while the concept $c_A$ is still described by a (classical) random variable.} Retaining all of the notation from classical exact learning, this scenario is characterized by a channel $\mN \in \cptp(\mH_A, \mH_B)$ composed of oracles and classical sampling from $p_X$ that prepares a classical-quantum state of the form
\begin{equation}\label{eq:quantum_exact}
    \sum_{a \in \A} p_A(a) \dm{a}_A \otimes \dm{\psi_{x}^a}_B
\end{equation}
where $|\psi_{x}^a\rangle = \sum_{i=1}^m \sqrt{p_X(x_i)} |x_i, c_a(x_i)\rangle \in \mH_B$ is a coherent superposition of examples $x_i$ along with the concept $c_a$ evaluated for that input. The goal of a quantum exact learner is to identify $c_a$, which they can accomplish by learning some channel $\mD\in \cptp(\mH_B, \mH_{\Ahat})$ acting on $\mH_B$ that outputs an estimate $\Ahat$ maximizing $\Pr_{p_A}(\Ahat=A)$. From the arguments of Sec.~\ref{sec:cht_reduction}, for each exact learning problem defined by priors $p_A$ and conditionals $p_B^a$ in Eq.~\ref{eq:class_exact}, there is a quantum exact learning problem that is at least as difficult, which may be constructed by choosing states $\dm{\psi_{x}^a}_B$ such that $\Delta(\dm{\psi_{x}^a}_B) = p_B^a$.

\topic{The generalization from classical to quantum exact learning immediately suggests a further generalization of quantum exact learning in which the concepts $c_A$ are themselves in superposition.} This task should be at least as hard as quantum exact learning, and should reduce to quantum exact learning via application of a completely dephasing channel $\Delta$. Referring to Fig.~\ref{fig:1}b, suppose that $RA$ initially contains a state $|\psi_{RA}\rangle := \sum_{a \in \A} \sqrt{p_A(a)} |a\rangle_R \otimes |a\rangle_A$. Constructing $\mN$ from oracles and classical sampling as before, we prepare the state
\begin{equation}
    \sum_{a \in \A} \sqrt{p_A(a)} \ket{a}_R \otimes \ket{\psi_{x}^a}_B.
\end{equation}
In this case, what should the goal be for a learner who has the ability to apply local operations to the system $B$? Based on the discussion of Sec.~\ref{sec:preliminaries}, the most natural choice is for the learner to learn a channel $\mD \in \cptp(\mH_B, \mH_{\Ahat})$ that maximizes the overlap of the final state $(\I \otimes \mD)(\sigma_{RB})$ with a maximally entangled state in $R\Ahat$. Based on the arguments of Sec.~\ref{sec:cht_reduction}, this choice to maximize the singlet fraction $q(R|B)_{\rho}$ results in a task that is at least as hard as quantum exact learning, and recovers the setting of quantum exact learning by application of the completely dephasing map on register $A$.

\subsection{The entanglement fraction task}

\topic{We are now ready to characterize a task which generalizes the setting of classical learning.} Recall that in Sec.~\ref{sec:preliminaries} we saw how classical hypothesis testing (involving discrete random variables) can be understood as a special case the more general task of achieving the maximal singlet fraction (involving finite dimensional Hilbert spaces and reference systems). In Sec.~\ref{sec:classical}, we saw that hypothesis testing provides bounds (via discretization arguments) for the error of classical learning. In Sec.~\ref{sec:qexact} we observed how exact learning (involving classical or quantum information) can be recovered as a special case of a task involving the maximal singlet fraction. We will now combine all of these ideas to introduce a quantum information processing task that generalizes the notion of classical learning, and furthermore we will show that the error of an optimal learner for this task may be bounded using tools from quantum information theory for finite-dimensional systems. 

\topic{We briefly review quantum information theory in infinite-dimensional Hilbert spaces, overloading notation from the previous sections in the process.} $\mH$ denotes a separable Hilbert space of states, with inner product denoted $\langle \Psi | \Phi\rangle$ for $|\Psi\rangle, |\Phi\rangle \in \mH$. $\mB(\mH)$ denotes bounded linear operators on $\mH$, and $\mT(\mH)\subseteq \mB(\mH)$ denotes the trace class, self-adjoint operators on $\mH$. The subset $\D(\mH) \subset \mT(\mH)$ of unit trace operators denotes the \textit{density operators}. We will sometimes denote a pure state in density operator form as $\dm{\Psi} := \Psi \in \D(\mH)$ when there is no risk of confusion, and we will tend to use uppercase (e.g. $\Phi, \Sigma$) for states of an infinite dimensional Hilbert space and lowercase (e.g. $\phi$, $\rho$, $\sigma$) for states of a finite dimensional space. We use bra-ket notation to denote both states in $\mH$ as well as (unbounded) linear functionals, e.g. sets of operators $\{|\alpha\rangle\}_{\alpha \in \Ac}$ satisfying $\langle \alpha | \alpha'\rangle = \delta(\alpha, \alpha')$. By contrast, a basis $\{|i\rangle\}_i \in \mH$ for separable $\mH$ will be indexed with natural numbers.  Frequently we will be working with states $|\Psi\rangle \in \mH$ that are completely specified by the sets of coefficients $\Psi(x) := \langle x| \Psi\rangle$ for elements $x$ of a (Borel) subset $\mathcal{X}\subset \R^d$, in which case we write $\Psi \in L^2(\mathcal{X})$ to denote square-integrable functions with respect to the measure $\I\{x \in \X\}$. A state $\rho \in \D(\mH)$ with $\mH = L^2(\X)$ in a continuous variable quantum system is naturally associated with a (positive) integral kernel operator $\rho: L^2(\X)\rightarrow L^2(\X)$ of the form $(\rho \Psi)(x) = \int_{x\in \X}dx \rho(x, x') \Psi(x')$ with $\langle x| \rho |x'\rangle = \rho(x, x')$ defined for all $x,x' \in \X$ \cite{integralkernel}. 

We will also be working with channels acting on separable Hilbert spaces. For now, we simply define the set of channels $\cptp(\mH, \mH')$ to be bounded linear maps $\mN: \mB(\mH)\rightarrow \mB(\mH')$ that act on a state $\Sigma \in \D(\mH)$ according to
\begin{equation}
    \mN(\Sigma) = \tr_{\mathcal{E}}(U(\Sigma \otimes \Psi)U^\dagger),
\end{equation}
where $\mH', \mathcal{E}$ are separable Hilbert spaces, $\Psi \in \D(\mathcal{E})$ is a pure state, and $U \in \U{\mH \otimes \mathcal{E}, \mH' \otimes \mathcal{E}}$ is an isometry. Further treatment of channels is delayed to Appendix~\ref{app:ent_recovery}.

\topic{Recall that the goal of a classical learner was to receive observations $B$ (which depended on $\alpha$) and then output an estimate $\hat{\alpha}$ that is classically correlated with $\alpha$. Analogously, in the entanglement fraction task a learner applies local operations to a system $B$ to maximize a type of quantum correlation between the resulting system $\hat{A}$ and some reference system $R$.} Consider separable Hilbert spaces $\mH_R, \mH_A,\mH_{\hat{A}}$, which may be taken to be $L^2(\Ac)$ without loss of generality, and some additional separable Hilbert space $\mH_B$. Fix a channel $\mathcal{N} \in \cptp(\mH_A, \mH_B)$, and consider all channels $\mathcal{D} \in \cptp( \mH_B, \mH_{\Ahat})$ acting on $B$. The quantum systems ($A$, $B$, and $\Ahat$) are each analogous to (continuous) random variables ($\alpha$, $B$, and $\hat{\alpha}$) in the classical learning setting: a quantum system $A$ evolves into a quantum system $B$ of ``observations'' via the map $\mN$, and then the learner prepares the state of system $\Ahat$ by applying $\mathcal{D}$ to system $B$. Unlike before, the initial state of the system is a potentially entangled state over the joint system $RA$. Given the system $RA$ prepared in some initial state $|\Psi\rangle_{RA}\in \mH_{RA}$,
\begin{align}
    |\Psi\rangle_{RA} &=  \int_{\Ac} dr  \int_{\Ac} d\alpha  \Psi_{RA}(r, \alpha) |r\rangle_R \otimes | \alpha \rangle_{A}\label{eq:psira} 
\end{align}
then system $B$ is prepared by applying $\mN$ to produce the state $\Sigma_{RB} = (\I \otimes \mN)(\dm{\Psi}_{RA})$. The goal of the learner is to find the $\mathcal{D}$ that maximizes the overlap of $(\I \otimes \mD)(\Sigma_{RB}) \in \D(\mH_{R\Ahat})$ with a particular (unnormalized) entangled state $|\Phi_\epsilon\rangle_{R\hat{A}} \in \mH_{R\hat{A}}$, defined as 
\begin{align} 
    |\Phi_\epsilon\rangle_{R\hat{A}} &= \int_{\Ac} dr \int_{\mathbb{B}_\epsilon(r)} d\hat{\alpha} |r\rangle_R \otimes  |\hat{\alpha}\rangle_{\hat{A}}\label{eq:phieps}
\end{align}
where $\mathbb{B}_\epsilon(r) := \{r' \in \Ac: d(r, r')\leq \epsilon\}$ denotes an $\epsilon$-ball centered at $r$ (restricted to $\Ac$). Thus, the learner's goal is to maximize the entanglement fraction of the final state on $R\Ahat$, which we might denote
\begin{equation}\label{eq:entanglement_fraction}
    Q_\epsilon(R|B)_\Sigma:= \langle \Phi_\epsilon| (\I \otimes  \mD)(\Sigma_{RB})|\Phi_\epsilon \rangle.
\end{equation}

\topic{Before proceeding, we justify the unusual choice of entangled state for this scenario.} A significant body of work has been devoted to characterizing entanglement in infinite dimensional systems \cite{eisert2002quantification,eisert2003introduction,kholevo2005notion,Adesso_2007}. In particular, many information-theoretic quantities in finite-dimensional systems have been at least partially extended to the infinite dimensional case \cite{furrer_min_2011,Regula_2021,Haapasalo_2021,yamasaki2024entanglement}, and such analyses could conceivably be extended to define a notion of maximal singlet fraction in infinite dimensions. As there is no well-defined notion of a maximally entangled state in a bipartite infinite-dimensional Hilbert space (barring the use of potentially unbounded operators \cite{Holevo_2011}), there is no obvious choice of entangled state with which to compute the fidelity of $(\I \otimes \mD )(\Sigma_{RB})$. However, we have chosen fidelity with the entangled state Eq.~\ref{eq:phieps} with the goal that this task to reduce to classical learning as a special case, via an argument involving dephasing that is analogous to the relationship between maximal singlet fraction and classical hypothesis testing provided in Sec.~\ref{sec:class_reduction}. In the next section we will demonstrate that the choice of entangled state $|\Phi_\epsilon\rangle$ satisfies this goal, and in this sense we will argue that above entanglement fraction task is a generalization of learning.

\subsection{The relationship to classical learning}\label{sec:class_reduction}

\topic{Just as optimal performance in classical hypothesis testing can be considered a special case of maximal singlet fraction in the presence of completely dephasing noise, we will now demonstrate how classical learning arises as a special case of the entanglement fraction maximization defined above.} The technique of Sec.~\ref{sec:cht_reduction} of applying a completely dephasing channel does not generalize to our purposes. The issue is that, while one could define a completely dephasing channel with respect to a countable basis $\{|i\rangle\} \subset \mH$ (e.g. Ref.~\cite{holevo2005separability}), this approach will not recover statements about probability distributions over parameters $\alpha \in \Ac$ that resemble classical learning of continuous parameters in any clear way. The basic problem is that any such basis is countable in $\mH$, but our learning task involves statements about probability densities $\mP(\Ac)$. There is no such basis for $\mH$ with components labelled by $\alpha \in \Ac$ that is not also overcomplete \cite{cholevo_statistical_2001}. Here, we will use an approach involving unbounded operators acting on $\mH$, though we will need to also employ several basic concepts from measure theory.

For $\mH = L^2(\Ac)$, denote the \textit{expectation value} $\langle X \rangle_\rho$ of a self-adjoint linear operator $X \in \mL(\mH)$ with respect to a state $\rho\in \D(\mH)$. Computed with respect to the set of linear functionals $\{\ket{\alpha}\}_{\alpha \in \Ac}$ that form a resolution of the identity, we have that $\langle X \rangle_\rho = \int_{\Ac} \int_{\Ac} d\alpha d\alpha' \langle \alpha | X | \alpha' \rangle \rho(\alpha, \alpha') \in \mathbb{R}$. Then, we define the (unbounded) map $\Delta_{\Ac}: \D(\mH)\rightarrow \mathcal{L}(\mH)$, the image of which has an expectation value that satisfies
\begin{equation}\label{eq:delta_A}
    \langle \Delta_{\Ac}(\rho)\rangle_{\sigma} = \int_{\Ac} d\alpha   \rho (\alpha, \alpha) \sigma(\alpha, \alpha).
\end{equation}
for $\rho, \sigma \in \D(\mH)$. If we associate the input state $\rho$ with a probability density $p \in L^1(\Ac)$ according to $p(\alpha):= \rho(\alpha, \alpha) = \langle \alpha | \rho |\alpha\rangle$ and associate $\sigma$ with $q \in L^1(\Ac)$ similarly, then Eq.~\ref{eq:delta_A} corresponds to the probability density for measurements of $\rho$ and $\sigma$ to agree. While the coefficients $\langle \alpha | \rho |\alpha\rangle$ corresponding to a set of operators $\{ \dm{\alpha}\}_{\alpha \in \Ac}$ do not correspond to outcomes of a valid measurement ($\{ \dm{\alpha}\}_{\alpha \in \Ac}$ is not a POVM), a valid POVM can be constructed by integrating the operators $\dm{\alpha}$ with respect to a partition of Borel subsets of $\Ac$. In this way, the quantities we will derive may be interpreted in terms of a measurement with respect to $\alpha \in \Ac$ while not themselves corresponding to any such measurement. Defining a probability density $p_R(r) = \int_\Ac d\alpha |\Psi(r, \alpha)|^2$ and writing the state
\begin{align}
    |\Psi^r\rangle_A = \frac{1}{\sqrt{p_R(r)}} \int_\Ac d\alpha \Psi(r, \alpha) |\alpha\rangle \in L^2(\Ac),
\end{align}
we find
\begin{align}
    \langle \Delta_{\Ac}(\Phi_\epsilon)\rangle_{\I \otimes \mathcal{D}\circ \mathcal{N}(\Psi_{RA})} &=  \int_\Ac dr p_R(r) \int_{\mathbb{B}_\epsilon(r)} d\hat{\alpha} \langle \hat{\alpha}| \mathcal{D}\circ \mathcal{N}(\Psi^r) |\hat{\alpha} \rangle
    \\&=  \int_\Ac dr p_R(r) \Pr_{p_{\Ahat|R}}( d(\hat{\alpha}, r) \leq \epsilon), \label{eq:diagonal_prob}
\end{align}
where we have identified a probability density $p_{\Ahat|R}(\hat{\alpha}|r) = \langle \hat{\alpha}| \mathcal{D}\circ \mathcal{N}(\Psi^r) |\hat{\alpha} \rangle$ for an outcome $\hat{\alpha}$ with respect to the collection of operators $\{ \dm{\hat{\alpha}}\}_{\alpha \in \Ac}$ applied to a state $\Sigma_{\Ahat}^r := \mathcal{D}\circ \mathcal{N}(\Psi^r) \in \D(\mH_{\Ahat})$. Substituting $r$ for $\alpha$, we recover the statement of Eq.~\ref{eq:class_wit} relating classical learning to the preparation of a variable $\hat{\alpha}$ (given observations in system $B$) that is highly correlated with the target parameter $\alpha$ via optimization over channels $\mathcal{D} \in \cptp(\mH_B, \mH_{\Ahat})$. Indeed, the operator $ \Delta_{\Ac}(\Phi_\epsilon)$ is essentially the indicator function $\I\{\hat{\alpha} \in \mathbb{B}_\epsilon(r)\}$ of Eq.~\ref{eq:class_wit} that is nonzero whenever the learner's estimate $\hat{\alpha}$ is $\epsilon$-close to a classical reference variable $r$ (which may be chosen arbitrarily close to $\alpha$ in the classical case).

The goal of the entanglement fraction task is for the learner to maximize $Q_\epsilon(R|B)_\Sigma$ with respect to channels $\mD \in \cptp(\mH_B, \mH_{\Ahat})$. We may understand the best-case performance that a learner might achieve at this task by considering the maximum of $Q_\epsilon(R|B)_\Sigma$ with respect to states $|\Psi_{RA}\rangle \in \mH_{RA}$. We will therefore compare maximization with respect to $|\Psi\rangle_{RA} \in \mH_{RA}$ and estimator channels to the maximum over \textit{parameters} $\alpha \in \Ac$ and stochastic maps in the classical case (e.g. Proposition~\ref{prop:maximax}). Since Eq.~\ref{eq:diagonal_prob} no longer explicitly depends on the system $A$, optimization over initial states and  channels $\mD$ recovers
\begin{align}
\sup_{|\Psi\rangle_{RA} \in \mH_{RA}} \sup_{\mathcal{D} \in \cptp(\mH_B, \mH_{\Ahat})} &\log \int_\Ac dr p_R(r) \Pr_{p_{\Ahat|R}}( d(\hat{\alpha}, r) \leq \epsilon)
\\&=\sup_{p_{R}\in \mP(\Ac)} \sup_{\hat{\alpha}: B \rightarrow \Ahat} \log \int_\Ac dr p_R(r) \Pr_{p_{B|R}}( d(\hat{\alpha}(B), r) \leq \epsilon)
\\&:= \esssup_{\alpha \in \Ac}\sup_{\hat{\alpha}: B \rightarrow \Ahat} \log \Pr_{p_{B|\alpha}}( d(\hat{\alpha}(B), \alpha) \leq \epsilon)
\end{align}
where the $\esssup$ indicates that the supremum is to be taken with respect to subsets $\{\alpha\}\subseteq \Ac$ having nonzero measure with respect to any probability measure defined on $\Ac$. This constraint is arguably more reasonable than a taking a supremum over all $\alpha \in \Ac$ by avoiding pathological scenarios, for instance, distributions of the unknown target parameter $\alpha$ that are everywhere discontinuous. Note that the system $B$ implicitly represents a quantum system, with $\hat{\alpha}$ being some composition of quantum channel followed by measurement - we discuss this limitation further in Sec.~\ref{sec:discussion}. 

\topic{We have shown that maximizing the entanglement fraction (the overlap with the particular entangled state $|\Phi_\epsilon\rangle$) reduces to maximizing classical correlations (the expected value of $\Delta_{\Ac}(\Phi_\epsilon)$) that correspond to classical learning}. Specifically, the choice of $\epsilon$ in $|\Phi_\epsilon\rangle$ corresponds to the error probability $\Pr(d(\alpha, \hat{\alpha}) \geq \epsilon)$ in some continuous parameter learning task, and $\dm{\Phi_\epsilon}$ is essentially a witness for quantum correlations that reduces the operator $\I\{\hat{\alpha} \in \mathbb{B}_\epsilon(\alpha)\}$ in Eq.~\ref{eq:class_wit} used to quantify classical correlations in the classical setting.

\subsection{Bounds for the entanglement fraction task}

\topic{We are interested in upper and lower bounds on the error of the entanglement fraction task, but we must first impose an additional constraint on the problem.} The state $\ket{\Phi_\epsilon}$ is both theoretically challenging to work with due to its possibly infinite Schmidt rank, and practically challenging since preparation of such a state in a lab may be infeasible. Instead, we  restrict our analysis to a more feasible scenario in which the entanglement in $\ket{\Phi_\epsilon}$ has been bounded by applying some bipartite projector $\Pi \in \mL(\mH_{R \Ahat})$. This choice also allows us to interpret the entanglement fraction in terms of teleportation directly.

\topic{We now state an upper and lower bound on the entanglement fraction task, via reduction to the maximal singlet fraction in finite dimensions}. Essentially the claim is that, a learner who acts locally via an \textit{optimal} channel $\mathcal{D}$ on system $B$ of a state $\sigma_{RB}$ to maximize overlap with $\Pi|\Phi_\epsilon\rangle$ will, in the \textit{best case} with respect to the initial state of $RA$, have a performance guarantee in terms of the conditional entropy of an appropriately chosen finite-dimensional system:

\begin{theorem}[Entanglement fraction guarantee, informal]\label{thm:ent_recovery}
    Consider a bounded subset $\Ac$ of a metric space $(X, d)$. For any $\epsilon > 0$, let $\mW:=\mW(\epsilon, \Ac, d)$ be an $\epsilon$-net of $\Ac$ in $d$ and let $C(\Ac, \mW)$ be the corresponding $\epsilon$-covering partition. Let $|w\rangle$ be constant on $R_w\in C(\Ac, \mW)$ and zero elsewhere in $\Ac$, and define 
    \begin{align}\label{eq:partition_singlet}
        |\phi\rangle_{RA} &= \frac{1}{|\mW|^{1/2}} \sum_{w=1}^{|\mW|} |w\rangle \otimes | w\rangle
    \end{align}
    For a subset $K := \{(r, \alpha) \in \Ac \times \Ac : \hat{W}(r) = \hat{W}(\alpha)\}$, define a projector $\Pi \in \mB(\mH_{R\Ahat})$ onto the subspace of $\mH_{R\Ahat}$ spanned by linear forms $|r\rangle \otimes |\alpha\rangle$ with $(r, \alpha) \in K$. Then, we have
    \begin{equation}
         \sup_{|\Psi\rangle_{RA} \in \mH_{RA}} \sup_{\mathcal{D} \in \cptp(\mH_B, \mH_{\Ahat})} \log \langle \Phi_\epsilon | \Pi(\I  \otimes \mathcal{D}) (\Sigma) \Pi| \Phi_\epsilon\rangle_{R\hat{A}} \geq  - \HHH(R|B)_\sigma 
    \end{equation}
    where $\Sigma = (\I \otimes \mN)(\dm{\Psi}_{RA}) \in \D(\mH_{RB})$ and $\HHH(R|B)_\sigma$ is computed with respect to $\sigma_{RB} = (\I \otimes \mathcal{N})(\dm{\phi}_{RA})$.

\end{theorem}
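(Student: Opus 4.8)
The plan is to lower-bound the supremum by evaluating it at a single convenient initial state and then collapsing the projected entanglement fraction onto a finite-dimensional maximal-singlet-fraction problem, to which the bound $2^{-\Hmin(R|B)_\sigma} \geq 2^{-\HHH(R|B)_\sigma}$ reviewed in Sec.~\ref{sec:preliminaries} applies directly. First I would restrict the outer supremum by taking $|\Psi\rangle_{RA} = |\phi\rangle_{RA}$, the partition-indexed maximally entangled state of Eq.~\ref{eq:partition_singlet}. This choice makes $\Sigma = (\I \otimes \mN)(\dm{\phi}) = \sigma_{RB}$, so the finite-dimensional state defining the right-hand side is exactly the one produced by $\mN$, and it then suffices to exhibit one channel $\mD$ with $\langle \Phi_\epsilon | \Pi (\I \otimes \mD)(\sigma_{RB}) \Pi | \Phi_\epsilon \rangle \geq 2^{-\HHH(R|B)_\sigma}$.

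Next I would collapse the overlap to finite dimensions. Because $|\phi\rangle$ is built from the functions $|w\rangle$ supported on the disjoint cells $R_w$, the reference marginal of $\sigma_{RB}$ — hence of $(\I \otimes \mD)(\sigma_{RB})$ — lives in $\operatorname{span}\{|w\rangle_R\}$, while $\Pi$ annihilates every term of $|\Phi_\epsilon\rangle$ with $\hat{W}(r) \neq \hat{W}(\hat\alpha)$. Using the resolution of identity $\{|\alpha\rangle\}_{\alpha \in \Ac}$ and the self-adjointness manipulations of Sec.~\ref{sec:class_reduction}, these two facts let me rewrite $\langle \Phi_\epsilon | \Pi (\I \otimes \mD)(\sigma_{RB}) \Pi | \Phi_\epsilon \rangle = \langle \zeta | (\I \otimes \mD)(\sigma_{RB}) | \zeta \rangle$, where $|\zeta\rangle = \sum_w c_w |w\rangle_R \otimes |\eta_w\rangle_{\hat{A}}$ is a finite-dimensional (unnormalized) entangled vector, $c_w$ is the normalization of $|w\rangle$, and $|\eta_w\rangle$ is the vector supported on $R_w$ obtained by integrating $|\Phi_\epsilon\rangle$ against $|w\rangle$. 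The vanishing of the cross-cell terms under $\Pi$ is precisely what makes this reduction exact, and is the structural role played by the projector.

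I would then recognize $|\zeta\rangle$ as a maximally entangled vector between $\operatorname{span}\{|w\rangle_R\}$ and $\operatorname{span}\{\tilde{\eta}_w\}$ (the normalized $|\eta_w\rangle$), up to per-cell weights $a_w = c_w \norm{\eta_w}$. Choosing $\mD$ to be the optimal finite-dimensional recovery channel $\mathcal{D}_0$ for $\sigma_{RB}$ with respect to $|\phi\rangle$ — which attains $q(R|B)_\sigma = 2^{-\Hmin(R|B)_\sigma}$ — post-composed with the isometry $|w\rangle \mapsto \tilde{\eta}_w$, turns the overlap into $\sum_{w,w'} a_w a_{w'} G_{ww'}$ with the positive-semidefinite Gram matrix $G_{ww'} = \langle w|_R \langle w|_{\hat{A}} (\I \otimes \mathcal{D}_0)(\sigma_{RB}) |w'\rangle_R |w'\rangle_{\hat{A}}$, whose unweighted sum $\sum_{w,w'} G_{ww'}$ is exactly $q(R|B)_\sigma$. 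When the weights are uniform this gives $q(R|B)_\sigma = 2^{-\Hmin(R|B)_\sigma} \geq 2^{-\HHH(R|B)_\sigma}$, and taking logarithms yields the claim.

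The hard part will be controlling the per-cell weights $a_w$, which are generally non-uniform and encode the geometry of how the $\epsilon$-balls in $|\Phi_\epsilon\rangle$ overlap the partition cells; in particular a cell $R_w$ can have diameter up to $2\epsilon$, so $\mathbb{B}_\epsilon(\hat\alpha) \cap R_w$ need not fill $R_w$. I expect the formal argument to require either a normalization of $|\Phi_\epsilon\rangle$ and the underlying measure guaranteeing $a_w \geq 1$, together with an argument exploiting the positivity of $G$ and the freedom in choosing $\mD$ so that $\sum_{w,w'} a_w a_{w'} G_{ww'} \geq \sum_{w,w'} G_{ww'}$, or a factor-of-two adjustment between the $\epsilon$-net scale and the cell diameter. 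This bookkeeping is exactly what the informal statement suppresses and what Appendix~\ref{app:ent_recovery} must make precise.
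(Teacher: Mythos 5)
Your proposal follows essentially the same route as the paper's Appendix~\ref{app:ent_recovery}: fix the initial state to the partition-indexed singlet $|\phi\rangle$ of Eq.~\ref{eq:partition_singlet}, collapse the projected overlap onto a finite-dimensional maximal-singlet-fraction problem, and invoke $2^{-\Hmin(R|B)_\sigma}\geq 2^{-\HHH(R|B)_\sigma}$. Two comments on the pieces you leave open. First, the per-cell weights $a_w$: the paper disposes of them by computing $\Pi|\Phi_\epsilon\rangle$ directly, letting the delta function from $\langle\alpha|\alpha'\rangle$ collapse the ball integral $\int_{\mathbb{B}_\epsilon(r)}d\alpha'$ onto the cell integral $\int_{\Wc(w)}d\alpha$, so that $\Pi|\Phi_\epsilon\rangle=\sum_w|w\rangle\otimes|w\rangle=|\mW|^{1/2}|\phi\rangle$ exactly and the weights are uniform; no factor-of-two adjustment appears. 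Your objection is well taken, though: that computation silently requires $\I\{\alpha\in\mathbb{B}_\epsilon(r)\}=1$ for all pairs $r,\alpha$ in the same cell, i.e.\ that each covering cell has diameter at most $\epsilon$, whereas the $\epsilon$-net construction only guarantees diameter up to $2\epsilon$. The paper does not address this (the theorem is labeled informal), so your ``hard part'' is a genuine loose end in the paper's own argument rather than something its proof resolves; the clean fix is to build the covering partition from an $(\epsilon/2)$-net or to define $|\Phi_\epsilon\rangle$ with $2\epsilon$-balls. Second, your optimal recovery channel $\mathcal{D}_0$ is a priori only defined on a finite-dimensional subspace of $\mT(\mH_B)$ containing $\mN(\mT(\mH_{\Wc}))$, whereas the supremum ranges over $\cptp(\mH_B,\mH_{\Ahat})$; to exhibit an admissible competitor you need the extension step the paper isolates as Lemma~\ref{prop:cptp}, which embeds a finite-dimensional channel into a channel on the full separable space agreeing on the relevant states. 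With those two points supplied, your argument is the paper's.
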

The proof is provided in Appendix~\ref{app:ent_recovery}. The basic idea may be understood as a communication protocol between Alice (system $RA$) and Bob (system $RB$) through a noisy channel $\mN$. While the parties may optimize over infinite-dimensional states and decoding channels respectively, they instead opt to use only finite dimensions - this results in a lower bound for their performance. The remaining details are in the specific sets of finite dimensional states (or channels, respectively) that Alice and Bob elect to use, which are informed by the $\epsilon$-covering partition chosen to establish the information-theoretic bound of Proposition~\ref{prop:maximax}. The key details are as follows: states that are piecewise-constant on $C(\Ac, \mW)$ are a subset of states in $\mH_{RA}$, the set of channels $\mD$ acting on $\mL(\mH_B)$ may be restricted to a subset acting nontrivially only on specific finite-dimensional subspaces, and the linearity of any fixed $\mN \in \cptp(\mH_A, \mH_B)$ restricts the dimension of the image of some subset of $\D(\mH_B)$ in a predictable way. Each of these properties is applied to bound either the infinum or supremum appearing in Theorem~\ref{thm:ent_recovery}, until the problem is completely described by states and channels acting on embedded finite-dimensional subspaces. The technical details are in setting up the discretizations of the relevant Hilbert spaces and picking specific operators and states to achieve the desired bound. In Sec.~\ref{sec:class_reduction} we have seen how this guarantee relates to classical learning, in that Theorem~\ref{thm:ent_recovery} resembles the classical learning guarantee of Proposition~\ref{prop:maximax} except for the presence of a score function $\score$.

\topic{Theorem~\ref{thm:ent_recovery} demonstrates how we can upper bound the error of the entanglement fraction task for a best-case scenario in terms of the conditional entropy of a particular finite-dimensional quantum state. We now discuss how we may lower bound the error of this task for a worst-case scenario in a similar way.} A key difference from the previous analysis is that the worst case entanglement fraction with respect to states $|\Psi\rangle_{RA} \in \mH_{RA}$ may be made arbitrarily small by preparing a \textit{separable} state of the form $|\Psi\rangle_{RA} = |\Psi_R\rangle \otimes |\Psi_A\rangle$. Since the learner's local operation $\mD$ cannot generate entanglement between $R$ and $\Ahat$, the resulting entanglement fraction $Q_\epsilon(R|B)_\Sigma$ may approach zero. Therefore we may only derive a non-trivial error lower bound if the space of input states is somehow constrained.

\topic{We will therefore only consider input states that have sufficiently large overlap with a maximally entangled state in some finite-dimensional subspace of $\mH_{RA}$}. This guarantees that, neglecting the effect of $\mN$, there is some fixed amount of entanglement initially present in the system. Briefly, for a state $|\Psi\rangle_{RA}$ we define $q_k(\Psi_{RA})$ to be the overlap of $|\Psi\rangle$ with $\frac{1}{\sqrt{k}}\sum_{i=1}^k |i_R\rangle \otimes |i_A\rangle$ for some choice of bases $\{|i_R\rangle\}_i \subset \mH_R$ and $\{|i_A\rangle\}_i\subset \mH_A$. To choose an appropriate $k$ for the task, we will once again consider a kind of discretization on our Hilbert space. As before, let $\mathcal{V} =\mathcal{V}(\epsilon, \Ac, d)$ be an $\epsilon$-packing of $\Ac$ in $d$. Letting $|v\rangle$ be constant within $\mathbb{B}_\epsilon(\alpha_v)$ and zero elsewhere, we can define a singlet-like state on $\mH_{RA}$ as
\begin{align}\label{eq:packing_singlet}
    |\phi\rangle_{RA} &= \frac{1}{|\mathcal{V}|^{1/2}} \sum_{v=1}^{|\mathcal{V}|} |v\rangle \otimes | v\rangle
\end{align}
Given that this is a maximally entangled state in $\C^{|\mathcal{V}|\times |\mathcal{V}|}$, a natural choice for the maximal singlet fraction of the initial state $|\Psi\rangle_{RA}$ is to choose a subspace of dimension $k = |\mathcal{V}|$. As before, we consider overlap with a state $\Pi|\Phi_\epsilon\rangle$ of bounded entanglement. The following then provides a upper bound on the entanglement fraction:

\begin{theorem}[Entanglement fraction bound, informal]\label{thm:ent_bound}
Let $|\Phi_\epsilon\rangle$, $\mathcal{D}$, and $\mathcal{N}$ be as defined in Theorem~\ref{thm:ent_recovery}, and define $q_k(\Psi)$ to be the singlet fraction of a state in $\mH_{RA}$ computed in any $(k\times k)$-dimensional subspaces of $\mH_{RA}$. Define a subset $K := \{(r, \alpha) \in \Ac \times \Ac : \hat{V}(r) = \hat{V}(\alpha)\} \subset \Ac \times \Ac$, where $\hat{V}$ returns the nearest point to $r$ in the $\epsilon$-packing $\mV$, and let $\Pi \in \mB(\mH_{R\Ahat})$ be a projector onto the subspace of $\mH_{R\Ahat}$ spanned by unbounded linear functionals $|r\rangle \otimes |\alpha\rangle$ with $(r, \alpha) \in K$. Then,
\begin{equation}\label{eq:ent_bound}
      \inf_{\substack{|\Psi\rangle_{RA}\in \mH_{RA} \\ q_{|\mathcal{V}|}(\Psi) = 1}} \sup_{\mathcal{D} \in \cptp(\mH_B, \mH_{\Ahat})} \bigl( 1 - |\mV|^{-1}\langle \Phi_\epsilon | \Pi  (\I\otimes \mathcal{D})(\Sigma) \Pi |\Phi_\epsilon \rangle \bigr) \geq \frac{\HHH(RB)_{\rho} - 1}{\log(|\mathcal{V}|^2 - 1)} 
\end{equation}
where  $\Sigma = (\I \otimes \mathcal{N} )(\dm{\Psi})$ and $\rho = (\I \otimes \mathcal{N})(\dm{\phi})$ for $|\phi\rangle$ defined in Eq.~\ref{eq:packing_singlet}.

\end{theorem}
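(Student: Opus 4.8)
The plan is to prove Theorem~\ref{thm:ent_bound} as the worst-case, quantum analogue of the classical Fano bound (Proposition~\ref{prop:minimax}): reduce the infinite-dimensional, $\Pi$-projected entanglement fraction task to a finite-dimensional maximal-singlet-fraction problem on a $|\mathcal{V}|$-dimensional subspace, and then apply the quantum Fano inequality (Proposition~\ref{prop:qfano}) with $d = |\mathcal{V}|$ in place of the classical Fano step.

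First I would carry out the reduction. The projector $\Pi$ retains only the ``diagonal'' forms $|r\rangle \otimes |\alpha\rangle$ with $\hat{V}(r) = \hat{V}(\alpha)$, i.e. those pairs lying in a common packing cell, so that $\Pi|\Phi_\epsilon\rangle$ is supported on the $|\mathcal{V}|$ blocks indexed by $\mathcal{V}$. Coarse-graining the continuous registers to these blocks identifies $\Pi|\Phi_\epsilon\rangle$, up to normalization, with the finite-dimensional maximally entangled state $|\phi\rangle$ of Eq.~\ref{eq:packing_singlet}, while the constraint $q_{|\mathcal{V}|}(\Psi) = 1$ forces the input $|\Psi\rangle_{RA}$ to be maximally entangled on the corresponding $|\mathcal{V}|$-dimensional subspace, so that the effective state on $RB$ is $\rho = (\I \otimes \mathcal{N})(\dm{\phi})$. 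Under this identification I would show that the normalized overlap $p := |\mathcal{V}|^{-1}\langle \Phi_\epsilon | \Pi (\I \otimes \mathcal{D})(\Sigma)\Pi |\Phi_\epsilon\rangle$ is exactly the entanglement fidelity $\langle\phi|(\I \otimes \mathcal{D})(\rho)|\phi\rangle$ of recovering $|\phi\rangle$ through $\mathcal{N}$ followed by $\mathcal{D}$, a quantity in $[0,1]$; here the factor $|\mathcal{V}|^{-1}$ plays exactly the role of the $1/d$ conversion between singlet fraction and fidelity seen in Section~\ref{sec:singlet_frac}.

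With $p$ so identified, I would apply Proposition~\ref{prop:qfano} with $d = |\mathcal{V}|$ and the pure state $|\phi\rangle$ to obtain $\HHH(RB)_\rho \leq h_2(p) + (1-p)\log(|\mathcal{V}|^2 - 1)$, where $\HHH(RB)_\rho$ is computed on $\rho = (\I \otimes \mathcal{N})(\dm{\phi})$ and therefore does not depend on $\mathcal{D}$. Using $h_2(p) \leq 1$ and rearranging gives $1 - p \geq (\HHH(RB)_\rho - 1)/\log(|\mathcal{V}|^2 - 1)$. Since this holds for every $\mathcal{D}$ and the entropy term is decoder-independent, the inequality survives the optimization over $\mathcal{D}$; identifying the canonical maximally entangled input $|\phi\rangle$ of Eq.~\ref{eq:packing_singlet} as the relevant element of the constrained infimum then yields the stated bound.

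The main obstacle will be the reduction step rather than the Fano step. Care is needed to verify that $\Pi$ and the unnormalized $|\Phi_\epsilon\rangle$ genuinely implement the discretization, namely that the projected overlap collapses to a sum over the $|\mathcal{V}|$ packing cells with no surviving cross terms and that the normalization produces a bona fide fidelity, and that the entropy bookkeeping is consistent, i.e. that the $\mathcal{D}$-dependent overlap $p$ and the $\mathcal{D}$-independent entropy $\HHH(RB)_\rho$ may legitimately be inserted into the single-$\rho$ form of Proposition~\ref{prop:qfano}, whether directly or via a data-processing comparison between the entropy of $(\I \otimes \mathcal{N})(\dm{\phi})$ and that of the post-recovery state on $R\Ahat$. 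Establishing that the constraint $q_{|\mathcal{V}|}(\Psi) = 1$ indeed forces the reduction to the canonical $|\phi\rangle$ for the purpose of the infimum over input states is the remaining technical point.
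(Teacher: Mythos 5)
Your proposal follows essentially the same route as the paper's proof: project $|\Phi_\epsilon\rangle$ through $\Pi$ onto the packing blocks to obtain $|\mathcal{V}|^{1/2}|\phi\rangle$, reduce to the finite-dimensional maximal singlet fraction of $\rho=(\I\otimes\mathcal{N})(\dm{\phi})$, and apply Proposition~\ref{prop:qfano} with $d=|\mathcal{V}|$ together with $h_2(p)\leq 1$ (the paper additionally routes the channel restriction through Lemma~\ref{prop:cptp}, and your flagged concern about inserting the $\mathcal{D}$-dependent overlap alongside the $\mathcal{D}$-independent entropy is a real subtlety the paper itself passes over). The one clarification needed is at the infimum step: the constraint $q_{|\mathcal{V}|}(\Psi)=1$ does not force $\Psi$ into the block subspace spanned by the $|v\rangle$ (it only requires maximal entanglement in \emph{some} $|\mathcal{V}|\times|\mathcal{V}|$ subspace), but no forcing is needed -- since $|\phi\rangle$ is feasible, the constrained infimum is upper bounded by its value at $|\phi\rangle$, which is exactly the ``relevant element'' you identify and is all the argument requires.
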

The proof is provided in Appendix~\ref{app:ent_bound}. Similarly to Theorem~\ref{thm:ent_recovery}, the LHS of Eq.~\ref{eq:ent_bound} is analogous to $\Pr(d(\alpha, \hat{\alpha}) \geq \epsilon)$ in Proposition~\ref{prop:minimax}, without any reference to the loss function $\ell$. The bounded expression (restricted to $[0,1]$) can be interpreted as a normalized error, which becomes larger as the maximal entanglement fraction decreases. Applying an overall factor of $|\mV|$ (corresponding to the chosen packing $\mV$), Theorem~\ref{thm:ent_bound} provides a lower bound for the error of the entanglement fraction task.

\section{Discussion}\label{sec:discussion}

\topic{In this work, we have demonstrated how classical learning can be understood as a special case of a particular entanglement manipulation task in a continuous variable quantum system.} The task that we have introduced is distinct from the typical setting of quantum machine learning, in that it involves coherent manipulation of quantum information at all stages. We derived upper and lower bounds for the error of this task via reduction to maximal singlet fraction in finite dimensions, in an analogous fashion to the reduction of classical learning to multi-hypothesis testing. 

\topic{A central challenge remains in \textit{applying} learning guarantees (such as Proposition~\ref{prop:maximax}) and analogous entanglement fraction guarantee (Theorem~\ref{thm:ent_recovery})}. Applying the former relies on upper bounding the conditional entropy $\HHH(W|B)$ of a variable indexed by the $\epsilon$-net for a particular distribution of observations $p_{B|A}$. For context, when a learner's observations are independent samples $B=X_1, \dots, X_n$, then minimax bounds of the form of Proposition~\ref{prop:minimax} may be combined with the convexity of relative entropy to lower bound $\HHH(V|B)$ in terms of marginal entropies \cite{loh_corrupted_2012}. Applying the entanglement fraction guarantee may require identifying a problem setup where similar simplifications hold.

\topic{There are key differences between the entanglement fraction bounds and their classical analogues, with significant technical challenges appearing in the former case.} For instance, there is no obvious analogue for the loss function $\ell$ and score function $\score$ applied to the distance $d(\hat{\alpha}, \alpha)$ in the classical setting. One might introduce some unbounded operator $\dm{L} \in \mL(\mH_{R \hat{A}})$ to fulfill an analogous role to the loss function $\ell$, for instance defining $|L\rangle$ such that $\langle \alpha, \hat{\alpha} | L\rangle =  \ell^{1/2}(d(\alpha,  \hat{\alpha}))$ for $\alpha, \hat{\alpha} \in \Ac$. However,  the non-commutativity of quantum theory obscures how the expected value of such an observable is related to a state's overlap with $\ket{\Phi_\epsilon}$. Future work may explore alternative generalizations of classical learning that can better accommodate loss functions. In Sec.~\ref{sec:class_reduction} we discussed how the entanglement fraction task reduced to a classical learning task, albeit one involving quantum input data. A reduction to the purely classical case might involve technical hurdles. While applying the completely dephasing map $\Delta$ to the system $RB$ in the finite-dimensional case (Eq.~\ref{eq:finite_dephase}) is straightforward, a similar procedure could not be followed for the infinite-dimensional case due to the unboundedness of the map $\Delta_{\Ac}$. Equivalently, evaluating $\langle \Delta_{\Ac}\circ(\I \otimes \mD^*)(\Phi_\epsilon)\rangle_{(\I\otimes \mN)\Psi_{RA}}$ involves products of unbounded operators, which raises difficulties. In any case, we have imposed a reduction to classical learning by exploiting the commutativity of measurements with respect to a spectral measure constructed from linear functionals $\{|\alpha\rangle\}_{\alpha \in \Ac}$, at the expense of constraining measurement statistics in a conjugate basis. A more complete theory should ideally involve genuinely non-classical measurement statistics, i.e. constraints on the measurements of incompatible observables. 

\topic{In this work we have focused on a specific type of learning task involving parameter estimation, which differs from the task of predicting the value of a target function applied to a set of observations -- another typical task in machine learning.} Our analysis generalizes to the case when the observations are produced by a function $f_\alpha$ parameterized by $\alpha \in \Ac$. However, in more general learning models the data may be produced by a process for which there is no explicit target function $f_\alpha$ (as is the case in agnostic PAC learning \cite{haussler_decision_1992}). It remains an open question how we might view more general prediction tasks through a similar framework as the one developed here.

\section{Acknowledgements}

Thank you to Achim Kempf, T. Rick Perche, Jose Polo Gomez, Arsalan Motamedi, Kohdai Kuroiwa, Ernest Tan, and Jon Yard for helpful discussions and reviewing drafts of this manuscript. Circuit diagrams were generated using the \textsc{quantikz} circuit library \cite{quantikz}. Research at Perimeter Institute is supported in part by the Government of Canada through the Department of Innovation, Science and Economic Development and by the Province of Ontario through the Ministry of Colleges and Universities.

\clearpage
\printbibliography

\clearpage
\begin{appendices}

% % % % % % % % % % % % % % % % % % % % % % % % % % % % % % 

\section{Proof of Classical learning lower bound}\label{app:minimax}

For completeness, we reproduce the typical minimax bound that uses mutual information to lower bound the error of the best estimator on the worst-case distribution (e.g. Ref.~\cite{hasminskii1981}). 

\begin{repproposition}{prop:minimax} 
Let $\Ac$ be a bounded subset of a metric space $(X, d)$ and let random variable $\B$ be distributed according to $p_{B|\alpha}$ conditioned on $\alpha \in \Ac$. Define an estimator $D: \B \rightarrow \Ac$ for $\alpha \in \Ac$. For a non-decreasing $\ell: \R^+ \rightarrow \R$, we have
\begin{equation}
  \min_{D: \B \rightarrow \Ac} \max_{\alpha \in \Ac } \underset{{p_{B|\alpha}}}{\E} \left[ \ell(d(D(B), \alpha))\right] \geq \ell(\epsilon) \min_{\hat{V}: \B \rightarrow \mathcal{V}} \Pr_{p_V} \left( \hat{V}(B) \neq V\right).
\end{equation}
where the random variable $V\sim p_V$ is distributed uniformly over $\{1, \dots, |\mV|\}$, with $\mV:=\mathcal{V}(2\epsilon, \Ac, d)$ being $2\epsilon$-packing of $\Ac$ in $d$.
\end{repproposition}

\begin{proof}
    First, the total error is lower bounded by considering only the error arising from estimates that are at least $\epsilon$-far from $\alpha$, and assigning such estimates their minimal possible error of $\ell(\epsilon)$ (since $\ell$ is non-decreasing). Let $p_{\hat{\alpha}|\alpha}$ be the conditional distribution of $\hat{\alpha} = D(B)$ induced via the estimator $D: \B \rightarrow \A$ acting on $B\sim p_{B|\alpha}$, and henceforth denote $\hat{\alpha}:= D(B)$.  Defining $\mathbb{B}_\epsilon(\alpha) := \{\alpha' \in \Ac: d(\alpha, \alpha') < \epsilon\}$ and its complement $\mathbb{B}_\epsilon^c (\alpha) = \Ac \backslash \mathbb{B}_\epsilon(\alpha)$, then for any fixed $\alpha$ we have
\begin{align}
    \E_{p_{B|\alpha}} [ \ell(d(\hat{\alpha}, \alpha))] &= \int_{\Ac} d\hat{\alpha} p_{\hat{\alpha}|\alpha}(\hat{\alpha}|\alpha) \ell(d(\alpha, \hat{\alpha}))
    \\&= \int_{\mathbb{B}_\epsilon(\alpha)}  d\hat{\alpha} p_{\hat{\alpha}|\alpha}(\hat{\alpha}|\alpha) \ell(d(\alpha, \hat{\alpha})) + \int_{\mathbb{B}_\epsilon^c(\alpha)}  d\hat{\alpha} p_{\hat{\alpha}|\alpha}(\hat{\alpha}|\alpha) \ell(d(\alpha, \hat{\alpha})) 
    \\&\geq \ell(\epsilon)\int_{\mathbb{B}_\epsilon^c(\alpha)}  d\hat{\alpha} p_{\hat{\alpha}|\alpha}(\hat{\alpha}|\alpha) 
    \\&:= \ell(\epsilon)\Pr_{p_{\hat{\alpha}|\alpha}} (d (\alpha, \hat{\alpha}) \geq \epsilon)
\end{align} 
The first line uses several basic properties of the Markov chain $A \rightarrow B \rightarrow D(B)$:
\begin{align}
    \E_{p_{B|\alpha}} [ f(D(\beta), \alpha)] &= \int_\B d \beta p_{B|A}(\beta | \alpha) f(D(\beta), \alpha)
    \\&= \int_\B d \beta p_{B|A}(\beta | \alpha) f(D(B), \alpha) 
    \\&= \int_\B d \beta p_{B|A}(\beta | \alpha)  \underbrace{\int_\A d\hat{\alpha} p_{\hat{A}|B}(\hat{\alpha}|\beta) }_{=1} f(D(\beta), \alpha) 
    \\&= \underbrace{\int_\B d \beta p_{B|A}(\beta | \alpha)}_{=1}  \int_\A d\hat{\alpha} p_{\hat{A}|A}(\hat{\alpha}|\alpha)  f(\hat{\alpha}, \alpha) 
    \\&= \int_{\Ac} d\hat{\alpha} p_{\hat{\alpha}|\alpha}(\hat{\alpha}|\alpha) f(\hat{\alpha}, \alpha)
\end{align}
where we have applied the chain rule, conditional independence of $\hat{A}$ from $A$ given $B$, and conditional independence of $B$ from $\hat{A}$ given $A$. Consider $\hat{V}: \Ac \rightarrow \mathcal{V}(2\epsilon, \Ac, d)$ that returns the closest point in a $2\epsilon$-packing $\mathcal{V}(2\epsilon, \Ac, d)$ to an estimate $\hat{\alpha} := D(B) \in \Ac$. Writing $[|\mV|] := \{1, \dots, |\mV|\}$ and indexing every element $\alpha_v \in \mathcal{V}$ by $v \in [|\mV|]$, we define
\begin{equation}
    \hat{V}(\hat{\alpha}) = \argmin_{v \in S_{\mV}} d(\hat{\alpha}, \alpha_v).
\end{equation}
Since $[|\mV|]$ indexes a $2\epsilon$-packing, $d(\hat{\alpha}, \alpha_v) < \epsilon$ implies $\hat{V}(\hat{\alpha}) = v$ and conversely $\hat{V}(\hat{\alpha}) \neq v$ implies $d(\hat{\alpha}, \alpha_v) \geq \epsilon$. So, for any $v \in S_{\mV}$ we have
\begin{align}\label{eq:packing_ineq}
     \Pr_{p_{\hat{\alpha}|\alpha_v}}(\hat{V}(\hat{\alpha}) \neq v) \leq \Pr_{p_{\hat{\alpha}|\alpha_v}} (d(\hat{\alpha}, \alpha_v) \geq \epsilon)
\end{align}
Considering the most difficult element $\alpha \in \Ac$ to learn leads to the following chain of inequalities, which holds for any prior probability distribution $p_V$ on $[|\mV|]$:
\begin{align}
    \max_{\alpha \in \Ac} \Pr_{p_{\hat{\alpha}|\alpha}} (d(\hat{\alpha}, \alpha) \geq \epsilon) &\geq \max_{v \in [|\mV|]} \Pr_{p_{\hat{\alpha}|\alpha_v}} (d(\hat{\alpha}, \alpha_v) \geq \epsilon)
    \\&\geq \E_{p_V} \Pr_{p_{\hat{\alpha}|\alpha_v}} (d(\hat{\alpha}, \alpha_v) \geq \epsilon)
    \\&= \sum_{v \in [|\mV|]} \Pr_{p_{\hat{\alpha}|\alpha_v}} (d(\hat{\alpha}, \alpha_v) \geq \epsilon) p_V(v)
    \\&\geq \sum_{v \in [|\mV|]} \Pr_{p_{\hat{\alpha}|\alpha_v}}( \hat{V}(\hat{\alpha}) \neq v) p_V(v)\label{eq:line23}
    \\&= \Pr_{p_V}( \hat{V}(\hat{\alpha}) \neq V)
\end{align}
Combining Eqs.~\ref{eq:packing_ineq} and~\ref{eq:line23} and minimizing over estimates $\hat{\alpha}$ gives
\begin{align}
     \min_{\hat{\alpha} } \max_{\alpha \in \Ac } \underset{{p_{\hat{\alpha}|\alpha}}}{\E} \left[ \ell(d(\hat{\alpha}(B), \alpha))\right] &\geq \min_{\hat{\alpha} } \max_{\alpha \in \Ac } \ell(\epsilon)  \Pr_{p_{\hat{\alpha}|\alpha}} (d(\hat{\alpha}, \alpha) \geq \epsilon)
     \\&\geq  \min_{\hat{\alpha} } \ell(\epsilon) \Pr_{p_V}( \hat{V}(\hat{\alpha}) \neq V)
     \\&:=  \ell(\epsilon) \min_{\hat{V}: \mathcal{B} \rightarrow [|\mV|] } \Pr_{p_V}( \hat{V}(B) \neq V)
\end{align}
where we have observed that the optimal discrete estimator $\hat{V}$ acting on observations in $\mathcal{B}$ performs equally well as the the optimal discretized estimator $\hat{\alpha}$ applied to those same observations.

\end{proof}

% % % % % % % % % % % % % % % % % % % % % % % % % % % % % % 
\section{Proof of classical learning upper bound}\label{app:maximax}

We prove the following statement from the main text:
\begin{repproposition}{prop:maximax}
Let $\Ac$ be a bounded subset of a metric space $(X, d)$ and let random variable $\B$ be distributed according to $p_{B|\alpha}$ conditioned on $\alpha \in \Ac$. Then, for non-increasing, positive $\score: \R^+ \rightarrow \R^+$, the best-case performance (with respect to $\alpha \in \Ac$) of an optimal estimator $\hat{\alpha}$ is lower bounded according to 
    \begin{equation}
    \max_{\alpha \in \mathrm{A}} \max_{D: \mathcal{B}\rightarrow \Ac} \E_{p_{B|\alpha}} [\score(d(\alpha, D(B)))] \geq \score(\epsilon) 2^{-\HHH(W|B)}
\end{equation}
where $W$ is a random variable uniformly distributed on the set $\{1, \dots, |\mathcal{W}\left(\epsilon, \Ac, d\right)|\}$.
\end{repproposition}
\begin{proof}

First, for any fixed $\alpha \in \Ac$ the learner's expected score for sampling $\hat{\alpha}:= \hat{\alpha}(B)$ with respect to the induced distribution $p_{\hat{\alpha}|\alpha}$ is no less than the minimum expected score considering only estimates $\hat{\alpha}$ that are $\epsilon$-close to $\alpha$. Then, we have
\begin{align}
    \E_{p_{\hat{\alpha}|\alpha}} [\score(d(\alpha, \hat{\alpha}))] &=\E_{p_{\hat{\alpha}|\alpha}} [\score(d(\alpha, \hat{\alpha})) ( \I\{d(\alpha, \hat{\alpha}) \leq \epsilon\} + \I\{d(\alpha, \hat{\alpha}) > \epsilon\})]
    \\&\geq \E_{p_{\hat{\alpha}|\alpha}} [\score(d(\alpha, \hat{\alpha})) \I\{d(\alpha, \hat{\alpha}) \leq \epsilon\}]\label{eq:l17}
    \\&\geq \score(\epsilon) \Pr_{p_{\hat{\alpha}|\alpha}} (d(\alpha, \hat{\alpha}) \leq \epsilon).  \label{eq:l18}
\end{align}
Line~\ref{eq:l17} requires $\score(t) \geq 0$ while line~\ref{eq:l18} used the fact that $\score$ is non-increasing. Defining an $\epsilon$-net $\mathcal{W}\left(\epsilon, \Ac, d\right)$ with elements $\alpha_w$ indexed by $w \in [|\mW|]:=\{1, \dots, |\mathcal{W}\left(\epsilon, \Ac, d\right)|\}$ we have
\begin{align}
    \max_\alpha \Pr_{p_{\hat{\alpha}|\alpha}} (d(\alpha, \hat{\alpha}) \leq \epsilon) &\geq \max_{w \in [|\mW|]} \Pr_{p_{\hat{\alpha}|\alpha_w}} (d(\alpha_w, \hat{\alpha}) \leq \epsilon)
    \\&\geq \underset{w \in [|\mW|] }{\E}\Pr_{p_{\hat{\alpha}|\alpha_w}} (d(\alpha_w, \hat{\alpha}) \leq \epsilon ) \label{line:rgd}
\end{align}
The first inequality follows from $\mathcal{W}(\epsilon, \Ac, d)\subset \Ac$,and the second follows since the best-case estimator always outperforms the average-case estimator. For some distribution on $[|\mW|]$, the RHS of Eq.~\ref{line:rgd} might result in a trivial bound since some distributions on $[|\mW|]$ will not contain any $w$ such that $d(\alpha_w, \hat{\alpha}) \leq \epsilon$. We may account for this possibility by evaluating the final expression for the uniform distribution on $\mathcal{W} $, i.e. 
\begin{equation}\label{line:c2}
    \max_\alpha \Pr(d(\alpha, \hat{\alpha}) \leq \epsilon) \geq \frac{1}{|\mathcal{W}|}\sum_{w \in [|\mW|]} \Pr_{p_{\hat{\alpha}|\alpha_w}} (d(\alpha_w, \hat{\alpha}) \leq \epsilon ).
\end{equation}
Finally, we substitute the continuous estimator for $\hat{\alpha}$ for a discrete estimator $\hat{W}: \Ac \rightarrow \mathcal{W}\left(\epsilon, \Ac, d\right)$ to recover a multi-hypothesis testing scenario. The discrete estimator is given by any $w \in [|\mW|]$ satisfying
\begin{equation}
    \hat{W}(\hat{\alpha}) = \argmin_{w \in [|\mW|]} d(\hat{\alpha}, \alpha_w)
\end{equation}
For some fixed $\alpha$, suppose $w$ is an index for which $d(\alpha, \alpha_w)\leq \epsilon$ (such an index must exist since we are working with an $\epsilon$-net). Then, $\hat{W} = w$ implies $d(\hat{\alpha}, \alpha_w) \leq \epsilon$. This is opposite to the implication in Eq.~\ref{eq:packing_ineq}, due to the possibility of overlap between the $\epsilon$-balls composing a covering, so that several elements of $\mathcal{W}(\epsilon, \Ac, d)$ may be $\epsilon$-close to $\hat{\alpha}$. Intuitively, the optimal estimator $\hat{W}$ has a lower probability of success because, for some fixed $\alpha_w$, $\hat{W}$ must output a unique solution from the $\epsilon$-covering partition, whereas several elements of $\mW$ may be $\epsilon$-close to $\alpha_w$. For any random variable $W$ taking values in $[|\mW|]$, the optimal estimator then obeys 
\begin{align}\label{eq:epscovei_Rneq}
    \max_{\hat{\alpha}}\Pr(d(\hat{\alpha}, \alpha_w) \leq \epsilon) \geq \max_{\hat{W}: \Ac \rightarrow [|\mW|]}\Pr(\hat{W}(\hat{\alpha}) = w).
\end{align}
Finally, from the definition of conditional min-entropy we have  
\begin{equation}\label{line:c3}
    -\log \max_{\hat{W}} \Pr(\hat{W} = W) = \Hmin(W|B) \leq \HHH(W|B).
\end{equation}
Combining Eqs.~\ref{line:rgd},~\ref{line:c2},~\ref{eq:epscovei_Rneq}, and~\ref{line:c3}, and applying the appropriate maximizations and logarithm (imposing strict positivity on $\score$ if necessary), we recover the statement from the main text.
\end{proof}

% % % % % % % % % % % % % % % % % % % % % % % % % % % % % % 
\section{Proof of entanglement fraction upper bound}\label{app:ent_recovery}

We will now prove the entanglement fraction guarantee presented in Theorem~\ref{thm:ent_recovery}. It is helpful to refer to the analogous classical learning guarantee: Recall that Proposition~\ref{prop:maximax} was proven by covering the space $\Ac$ with balls having radius no greater than $\epsilon$, and then showing that predicting the target parameter $\alpha$ to $\epsilon$ accuracy is easier than succeeding at a multi-hypothesis testing task for determining which ball has the closest center to the target parameter $\alpha$. The latter task may be understood as determining membership of $\alpha$ to an element of an $\epsilon$-covering partition of $\Ac$. Analogously, the technique here involves partitioning of a state space into a finite number of regions, and then working with classes of states defined piece-wise with respect to these regions. In this case, optimizing singlet fraction with respect to this class of states takes on a relationship to the entanglement fraction task that is similar to the relationship between hypothesis testing and classical learning. 

We return to the discussion of quantum channels, following the treatment of Ref.~\cite[Chapter 6]{attal2014quantum}. For separable Hilbert spaces $\X, \Y$, the partial trace $\tr_{\Y}: \X\otimes \Y \rightarrow \X$ is the unique operator satisfying $\tr(\tr_{\mathcal{Y}}(A) X) = \tr(A(X\otimes \I))$ for all $X \in \mB(\X)$ and $A \in \mT(\X \otimes \Y)$. For a linear map $\mN: \mB(\mH) \rightarrow \mB(\mH')$, then $\mN$ is a channel, denoted $\mN \in \cptp(\mH, \mH')$, if and only if for all $X \in \mT(\mH)$, we have $\mN(\rho) = \tr_{\mathcal{E}}(U(X \otimes \Sigma)U^\dagger)$ for some separable Hilbert space $\mathcal{E}$, some (pure) state $\Sigma \in \D(\mathcal{E})$ , and unitary operator $U \in \U{\mH \otimes \mathcal{E}, \mH' \otimes \mathcal{E}}$. An alternative characterization of quantum channels may be given in terms of completely positive unital linear maps. An operator $\mathcal{M}: \mB(\mH') \to \mB(\mH)$ is positive if it sends positive operators to positive operators, and is completely positive if the extension $\mathcal{M}_n$ of $\mathcal{M}$ to $\mB(\mH \otimes \C^n)$ given by $\mathcal{M}_n(X \otimes Y) = \mathcal{M}(X) \otimes Y$ is positive. We say that the operator is unital if $\mathcal{M}(\I) = \I$. Then, $\mN$ is a channel if there is a completely positive, unital $\mN^*: \mB(\mH') \rightarrow \mB(\mH)$ such that $\tr(\mN(X)Y) = \tr(X \mN^*(Y))$ for all $X \in \mT(\mH)$ and $Y \in \mB(\mH)$. We omit any further characterization of these maps, including the celebrated results of Stinespring \cite{stinespring} and Kraus \cite{kraus}, and direct the interested reader to Ref.~\cite{attal2014quantum}.

We once again consider the compact subset of a metric space $\Ac \subset \R^p$ representing the space of parameters $\alpha$ that a learner attempts to estimate in the classical setting. Instead of a probability distribution over $\Ac$, we will consider continuous-variable states defined with respect to this parameter space. We will consider pure states as square-integrable functionals $\Ac \rightarrow \mathbb{C}$ defined at each point in $\Ac$, thereby identifying each state $|\Psi\rangle \in \mH$ with $\Psi \in L^2(\Ac)$ and fixing our Hilbert space to be $\mH \sim L^2(\Ac)$. We define a partition of the subset $\Ac$ as a set $\{\Ac_1, \dots, \Ac_N\}$ of subsets that are non-intersecting and whose union recovers $\Ac$: $\bigcup_i \Ac_i = \Ac$ and $\Ac_i \cap \Ac_j = \emptyset$ whenever $i \neq j$, for $i=1, \dots, N$. For each subset in the partition, we define a space of states $\mH^{(i)}$ containing all square-integrable functions having support only on $\Ac_i$:
\begin{equation}
    \mH^{(i)} \sim L^2_{i}(\Ac_i) := \{\Psi \in L^2(\Ac): \Psi(\alpha) = 0 \text{ for all } \alpha \notin \Ac_i\},
\end{equation}
where $L^2_{i}$ may be thought of a $L^2$ with respect to a uniform measure having support only on $\Ac_i$. Each $L^2_i(\Ac_i)$ is embedded in $L^2(\Ac)$ in a straightforward way, such that $\langle \Psi, \Phi \rangle = 0$  for $\Psi \in \mH^{(i)}, \Phi \in \mH^{(j)}$ whenever $i \neq j$ (the inner product is computed with respect to the measure $\I\{\alpha \in \Ac_i\}$ on $\R^p$). As such, we write\footnote{For our setting it does not matter whether each subset $\Ac_i$ in the partition of $\Ac$ is open and/or closed. For a sense of closure, the reader may consider $L^2_i(\Ac_i)$ to be defined with respect to the \textit{closure} of $\Ac_i$. This does not affect any of our statements, as $L^2_i(\Ac_i)$ forms an equivalence class of functions on $\Ac\subset \R^p$ up to a (well-behaved) intersection of the closures of $\Ac_i$ and $\Ac_j$ having dimension at most $p-1$.}
\begin{equation}
    \bigoplus_{i=1}^N \mH^{(i)} = \mH, \qquad \mH^{(i)} \perp \mH^{(j)} \text{ for } i \neq j.
\end{equation}
We will define a set of pure states in $\mH$ that are piecewise constant within a set of subspaces $P = \{\mH^{(1)}, \dots, \mH^{(N)}\}$. This forms a finite-dimensional subspace of $\mH$ which we define as 
\begin{equation}\label{eq:partitionH}
    \mH_P = \{|\Psi\rangle \in \mH, \, \Psi(x) = \Psi(x') \text{ for all } x, x' \in \Ac_i, i=1,\dots N\}
\end{equation}
Letting $|i\rangle \in \mH$ be constant on $\mH^{(i)}$ and zero elsewhere, every element of $\mH_P$ may be written:
\begin{align}
    |\Psi\rangle &= \sum_{i=1}^N c_i |i_P\rangle.
\end{align}
By identifying $|i_P\rangle$ with a piecewise constant function on $L^2_i(\Ac)$, it follows that the expression on the right is bounded and normalizable, hence $\dm{\Psi}$ is a valid state. To construct bounds for the entanglement fraction task, we will need to reason about finite-dimensional subspaces of infinite-dimensional $\mH$. For this, the following will be useful.
\begin{lemma}\label{prop:cptp}
    Let $\mH, \mH'$ be separable Hilbert spaces and define  finite-dimensional subspaces $\mK \subset \mH$, $\mK' \subset \mH'$. Let $\Pi_{\mK}: \mH \rightarrow \mH$ be a projector onto the subspace $\mK$ of an orthogonal decomposition $\mH = \mK \oplus \mK^\perp$. Let $\mathcal{M} \in \cptp(\mK, \mK')$ be a fixed channel. Then, for any $X_H \in \mT(\mH)$ such that $\Pi_{\mK} X_H \Pi_{\mK} := X_K \in \mT(\mK)$, there exists some $\mN \in \cptp(\mH, \mH')$ such that
    \begin{equation}
        \left. \mN(X_H) \right|_{\mK} = \mathcal{M}(X_K)
    \end{equation}
\end{lemma}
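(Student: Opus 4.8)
The plan is to write down, in closed form, a single channel on all of $\mH$ that compresses its argument into $\mK$, pushes it through $\mathcal{M}$, and dumps the leftover trace into a fixed reference state. Fix any state $\omega \in \D(\mK')$, set $\Pi_{\mK}^\perp := \I - \Pi_{\mK}$, and define $\mN \colon \mB(\mH) \to \mB(\mH')$ by
\[
\mN(Y) = \mathcal{M}\!\left(\Pi_{\mK}\, Y\, \Pi_{\mK}\right) + \tr\!\left(\Pi_{\mK}^\perp\, Y\, \Pi_{\mK}^\perp\right)\omega .
\]
Here $\Pi_{\mK} Y \Pi_{\mK}$ is viewed as an element of $\mT(\mK)$ (finite rank, since $\mK$ is finite-dimensional), so $\mathcal{M}$ applies to it; the first term carries the part of $Y$ supported in $\mK$ through $\mathcal{M}$, while the second is a scalar multiple of $\omega$ whose only job is to absorb the trace that the compression would otherwise discard.

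The next step is to certify that $\mN \in \cptp(\mH,\mH')$, which I would do with the dual-map characterization recalled above: it suffices to exhibit a completely positive, unital $\mN^* \colon \mB(\mH') \to \mB(\mH)$ obeying $\tr(\mN(Y)Z) = \tr(Y\,\mN^*(Z))$ for all $Y \in \mT(\mH)$, $Z \in \mB(\mH')$. Using $\tr(\mathcal{M}(\Pi_{\mK} Y \Pi_{\mK})Z) = \tr\!\big(Y\,\Pi_{\mK}\,\mathcal{M}^*(\Pi_{\mK'} Z \Pi_{\mK'})\,\Pi_{\mK}\big)$ and $\tr(\Pi_{\mK}^\perp Y \Pi_{\mK}^\perp) = \tr(Y \Pi_{\mK}^\perp)$, where $\mathcal{M}^*$ is the completely positive unital dual of $\mathcal{M}$, I obtain
\[
\mN^*(Z) = \Pi_{\mK}\,\mathcal{M}^*\!\left(\Pi_{\mK'}\, Z\, \Pi_{\mK'}\right)\Pi_{\mK} + \tr(\omega Z)\,\Pi_{\mK}^\perp .
\]
This $\mN^*$ is a sum of two completely positive maps — conjugation of the CP map $\mathcal{M}^*$ by the compressions $\Pi_{\mK},\Pi_{\mK'}$, and the positive functional $Z \mapsto \tr(\omega Z)$ times the positive operator $\Pi_{\mK}^\perp$ — hence completely positive, and it is unital since $\mN^*(\I_{\mH'}) = \Pi_{\mK}\,\mathcal{M}^*(\I_{\mK'})\,\Pi_{\mK} + \tr(\omega)\,\Pi_{\mK}^\perp = \Pi_{\mK} + \Pi_{\mK}^\perp = \I_{\mH}$, using that $\mathcal{M}^*$ is unital and $\tr(\omega)=1$. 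Thus $\mN$ is a legitimate channel. (Equivalently, $\mN$ is trace-preserving because $\tr \mathcal{M}(\Pi_{\mK} Y \Pi_{\mK}) = \tr(\Pi_{\mK} Y \Pi_{\mK})$ and the two diagonal blocks exhaust $\tr Y$, the off-diagonal blocks being traceless.)

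Finally I would evaluate $\mN$ on the designated input: with $\Pi_{\mK} X_H \Pi_{\mK} = X_K$ the construction gives $\mN(X_H) = \mathcal{M}(X_K) + \tr(\Pi_{\mK}^\perp X_H \Pi_{\mK}^\perp)\,\omega$, which equals $\mathcal{M}(X_K)$ exactly once the hypothesis forces the leftover block to vanish, i.e. $\Pi_{\mK}^\perp X_H \Pi_{\mK}^\perp = 0$ (equivalently, $X_H$ is supported on $\mK$). This support condition is not optional: since $\mN$ preserves trace, $\tr \mN(X_H) = \tr X_H$, whereas $\tr \mathcal{M}(X_K) = \tr X_K$, so equality can hold only when the complementary trace is zero — precisely the regime in which the lemma is applied. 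I expect the genuine obstacle to be none of the algebra but rather confirming that $\mN$ qualifies as a channel in the separable-Hilbert-space sense of the excerpt; routing the verification through the dual completely-positive unital map avoids building a Stinespring dilation by hand, and the finite-dimensionality of $\mK$ and $\mK'$ guarantees that every compression involved stays trace class and bounded.
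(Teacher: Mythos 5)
Your construction is correct, and it takes a genuinely different route from the paper's. The paper proves the lemma by invoking a unitary (Stinespring-type) representation $\mathcal{M}(X_K) = \tr_{\mathcal{E}}(U(X_K\otimes\Omega)U^\dagger)$ and then embedding $U$ into a unitary $V = U\oplus W$ on all of $\mH\otimes\mathcal{E}$, declaring $\mN(X_H) = \tr_{\mathcal{E}}(V(X_H\otimes\Omega)V^\dagger)$; your proof instead writes the extension in closed form as a compress-and-forward channel $\mN(Y) = \mathcal{M}(\Pi_{\mK}Y\Pi_{\mK}) + \tr(\Pi_{\mK}^\perp Y\Pi_{\mK}^\perp)\,\omega$ and certifies it as a channel through its completely positive unital dual. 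Both arguments are valid and both buy the same conclusion; yours avoids constructing a dilation and makes the trace bookkeeping explicit. Notably, the caveat you isolate --- that $\mN(X_H)=\mathcal{M}(X_K)$ can only hold when $X_H$ is supported on $\mK$, since trace preservation forces $\tr X_H = \tr X_K$ --- applies equally to the paper's construction: with $V = U\oplus W$, the partial trace $\tr_{\mathcal{E}}(V(X_H\otimes\Omega)V^\dagger)$ picks up contributions from the $\mK^\perp$ block and the off-diagonal blocks of $X_H$ unless they vanish, so the paper's ``by construction $\mN(X_H)=\mathcal{M}(X_K)$'' also tacitly assumes $X_H = \Pi_{\mK}X_H\Pi_{\mK}$. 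This is indeed the regime in which the lemma is used (the inputs are embeddings of finite-dimensional operators supported on the distinguished subspace), so neither proof is broken, but your version has the advantage of surfacing the hypothesis rather than leaving it implicit in the notation $\Pi_{\mK}X_H\Pi_{\mK} := X_K$.
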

\begin{proof}
    We invoke the unitary representation of a channel $\mathcal{M}$. Assuming for now that $\dim(\mK) = \dim(\mK')$, there exists some choice of separable Hilbert space $\mathcal{E}$, unitary $U \in \U{\mK \otimes \mathcal{E}}$, and $\Omega \in \D(\mathcal{E})$ such that 
    \begin{equation}
        \mathcal{M}(X_K) = \tr_{\mathcal{E}} ( U (X_K \otimes \Omega) U^\dagger ),
    \end{equation}
     Then we choose an embedding $V \in \U{\mH \otimes \mathcal{E}}$ of $U$, e.g. $V = U \oplus W$ for some $W \in \U{\mK^\perp \otimes \mathcal{E}}$. Then, we find
    \begin{equation}
        \tr_{\mathcal{E}}(V (X_H \otimes \Omega) V^\dagger) := \mN(X_H)
    \end{equation}
     defines a valid quantum channel in $\cptp(\mH)$, and by construction $\mN(X_H) = \mathcal{M}(X_K)$. If $\dim(\mK) \neq \dim(\mK')$, we instead choose a separable Hilbert space $\mathcal{E}'$ such that $U$ is a unitary map from $\mK \otimes \mathcal{E} \to \mK' \otimes \mathcal{E}'$. The desired result follows instead for $\mathcal{M}(X_K)~=~\tr_{\mathcal{E}'}(U (X_K \otimes \Omega) U^\dagger)$.
\end{proof}
Recall that for an $\epsilon$-net we define  $\hat{W}: \Ac \rightarrow \Ac$ to be the function that returns the index of some closest point in $\mW(\epsilon, \Ac, d)$ to $r$:
\begin{equation}
    \hat{W}(r) \in \{w: \alpha_w = \argmin_{\alpha_{w'} \in \mW} d(r, \alpha_{w'})\},
\end{equation}
Since $\mW$ is finite a minimimum exists and any ties are broken in an arbitrary but consistent way.\footnote{For our analysis it should be that the set of possible values for $\hat{W}(r)$ is measure zero in $\Ac$ with respect to the Lebesgue measure anyways.} Given the above preliminaries, we may state our desired result:
\begin{reptheorem}{thm:ent_recovery}[Entanglement fraction guarantee]
     Given separable Hilbert spaces $\mH_R, \mH_A, \mH_B, \mH_{\hat{A}} \simeq L^2(\Ac)$ and a fixed $\mathcal{N} \in \cptp(\mH_A, \mH_B)$, define the (unnormalized) state $|\Phi_\epsilon\rangle_{R\hat{A}} \in \mH_{R \hat{A}} $ and $|\Psi\rangle_{RA}\in \mH_{RA}$ according to:
    \begin{align}
        |\Phi_\epsilon\rangle_{R\hat{A}} &= \int_{\Ac} dr \int_{\mathbb{B}_\epsilon(r)} d\hat{\alpha} |r\rangle_R \otimes  |\hat{\alpha}\rangle_{\hat{A}} \\
        |\Psi\rangle_{RA} &=  \int_{\Ac} dr  \int_{\Ac} d\alpha  \Psi_{RA}(r, \alpha) |r\rangle_R \otimes | \alpha \rangle_{A}
    \end{align}
    where $\mathbb{B}_\epsilon(r) := \{r'\in \Ac: d(r, r')\leq \epsilon\} \cap \Ac$ is the $\epsilon$-ball centered at $r$. Then, letting $\Sigma_{RB} = (\I\otimes \mN)(\dm{\Psi}_{RA})$, we have
    \begin{equation}
         \sup_{|\Psi\rangle  \in \mH_{RA}} \sup_{\mathcal{D} \in \cptp(\mH_B, \mH_{\Ahat})} \log \langle \Phi |\Pi_{\Wc^2} (\I  \otimes \mathcal{D}) (\Sigma_{RB}) \Pi_{\Wc^2}| \Phi\rangle_{R\hat{A}} \geq  -  \HHH(R|B)_\rho   
    \end{equation}
    where $\Pi_{\Wc^2}$ is the projector onto the subset $\{(r, \alpha) \in \Ac \times \Ac : \hat{W}(r) = \hat{W}(\alpha)\}$, and $\HHH(R|B)_\rho$ is computed with respect to $\rho_{RB} = (\I \otimes \mathcal{N})(\dm{\phi}_{RA})$, the outcome of applying $\mN$ to one half of a maximally entangled state $|\phi\rangle \in \mathbb{C}^{|\mathcal{W}|} \otimes \mathbb{C}^{|\mathcal{W}|}$.

\end{reptheorem}
\begin{proof}
The claim is intuitive as a statement about a quantum communication protocol. Consider two parties, Alice (with access to infinite dimensional systems $RA$) and Bob (infinite dimensional system $B$). Alice wishes to transmit quantum entanglement through a channel $\mN\in\cptp(\mH_A, \mH_B)$, in the sense that Bob will apply a channel $\mD \in \cptp(\mH_B, \mH_{\Ahat})$ in order to maximize the overlap of the final system $R\Ahat$ with some entangled state $|\Phi_\epsilon\rangle$. Then, given a full characterization of $\mN$, the two parties agree to use finite dimensional subspaces -- Alice prepares states with finite Schmidt rank, Bob considers channels with nontrivial action on a  subspace of input states -- and they maximize overlap of the final system with some projection of $|\Phi_\epsilon\rangle$. The bound characterizes optimal performance for this sequence of (embedded) finite-dimensional operations.

For each $w \in [|\mW|]$ we define $\Wc(w) := \{r: r\in \Ac, \hat{W}(r) = w\} \subset \Ac$ as the set of points in $\Ac$ for which the closest element of $\mW$ has index $w$. By construction, $\bigcup_{w \in [|\mW|]} \Wc(w) = \Ac$ and $\Wc(w) \cap \Wc(v) = 0$ for $w \neq v$, i.e. $\Wc:=\{\Wc(w): w \in [|\mW|]\}$ is an $\epsilon$-covering partition of $\Ac$ (c.f. Definition~\ref{def:ecp} of the main text). It will be convenient to use notation for describing elements of $L^2(\Ac)$ in terms of the linear forms $\{|x\rangle: x \in \mathcal{X}\subseteq \Ac\}$, e.g.
\begin{equation}
    |w\rangle:= \int_{\Wc(w)} dx |x\rangle.
\end{equation}
In particular, we are interested in two subspaces:
\begin{enumerate}
    \item $\mH_{\Wc} \subset \mH_{\Ahat}$, defined as the subspace of states which are piecewise constant on the $\epsilon$-covering partition $\Wc$ of $\Ac$ (see Eq.~\ref{eq:partitionH}). These states have the form
    \begin{equation} 
    |\psi\rangle_{\Ahat} = \sum_{w=1}^{|\mW|} \psi_w |w\rangle
    \end{equation}
    \item $\mH_{\Wc^2} \subset \mH_{RA}$, defined as the subspace of bipartite states of the form 
    \begin{equation}\label{eq:psi_ra_def}
        \ket{\psi}_{RA} = \sum_{w=1}^{|\mW|} \psi_w \ket{w}_R \otimes \ket{w}_A.
    \end{equation}
\end{enumerate}
Given that $\mH_{\Wc} \simeq \C^{|\mW|}$ and is therefore closed, we may partition $\mH_{\Ahat}$ into orthogonal subspaces, $\mH_{\Ahat}= \mH_{\Wc} \oplus \mH_{\Wc}^\perp$. Letting $\Pi_{\Wc}$ and $\Pi_{\Wc}^\perp$ be orthogonal projectors onto these respective subspaces, we have that $\Pi_{\Wc} |\psi\rangle = |\psi\rangle$ for $|\psi\rangle \in \mH_{\Wc}$. We will similarly define a projector that preserves states in $\mH_{\Wc^2}$. As $\Ac$ is a compact subset of $\R$, we may partition $\Ac \times \Ac$ into a finite number of bounded regions. We  consider the subset  of $\Ac \times \Ac$ containing coordinate pairs that share a closest point in an $\epsilon$-net $\mW(\epsilon, \Ac, d):=\mW$. We call this subset $K \subset \Ac \times \Ac$, defined as
\begin{align}
    K := \{(r, \alpha) \in \Ac \times \Ac : \hat{W}(r) = \hat{W}(\alpha)\}.
\end{align}
Define a projector $\Pi_{\Wc^2}$ onto the subspace of $\mH_R \otimes \mH_{\Ahat}$ spanned by linear forms $|r\rangle \otimes |\alpha\rangle$ with $(r, \alpha) \in K$, i.e. all $r, \alpha$ that share a nearest point in $\mathcal{W}$. $\Pi_{\Wc^2}$ may be written in this way as
\begin{align}
    \Pi_{\Wc^2} &= \int dr \int d\alpha \I \{\hat{W}(r) = \hat{W}(\alpha)\} \dm{r}\otimes \dm{\alpha}
    \\&= \int dr \int d\alpha \sum_{w=1}^{|\mW|}\I \{\hat{W}(r) = \hat{W}(\alpha) = w\} \dm{r}\otimes \dm{\alpha}
    \\&= \int dr \int d\alpha \sum_{w=1}^{|\mW|}\I \{\hat{W}(r) = w \} \I\{\hat{W}(\alpha) = w\} \dm{r}\otimes \dm{\alpha}
    \\&=  \sum_{w=1}^{|\mW|}\int dr \int d\alpha\I \{\hat{W}(r) = w \} \I\{\hat{W}(\alpha) = w\} \dm{r}\otimes \dm{\alpha}
    \\&= \sum_{w=1}^{|\mW|}\int\limits_{\Wc(w)} dr \int\limits_{\Wc(w)} d\alpha \dm{r}\otimes \dm{\alpha}
\end{align}
We then find,
\begin{align}
    \Pi_{\Wc^2}|\Phi_\epsilon\rangle  &= \sum_{w=1}^{|\mW|}   \int\limits_{\Wc(w)} dr \int\limits_{\Wc(w)} d\alpha \int\limits_{\mathbb{B}_\epsilon(r)} d\alpha' |r\rangle \delta(\alpha - \alpha') |\alpha\rangle
    \\&= \sum_{w=1}^{|\mW|}   \int\limits_{\Wc(w)} dr \int\limits_{\Wc(w)} d\alpha   |r\rangle  |\alpha\rangle
    \\&=  \sum_{w=1}^{|\mW|} |w\rangle \otimes |w\rangle
    \\&:= |\mW|^{1/2}  |\phi\rangle_{R\Ahat}, 
\end{align}
which follows since $\Wc(w) \cap\mathbb{B}_\epsilon(r) =\emptyset$ whenever $\hat{W}(r) \neq w$, and 
\begin{equation}\label{eq:littlephi}
    |\phi\rangle_{RA}:= \frac{1}{|\mW|^{1/2}} \sum_{w=1}^{|\mW|} |w\rangle \otimes |w\rangle
\end{equation}
is a maximally entangled state in $\mH_{\Wc^2}$. With these operators in place, we find
\begin{align}
     \sup_{|\Psi\rangle \in \mH_{RA}} &\sup_{\mathcal{D} \in \cptp(\mH_B, \mH_{\Ahat})} \langle \Phi_\epsilon | \Pi_{\Wc^2} (\I\otimes \mathcal{D}\circ \mathcal{N} )(\dm{\Psi}) \Pi_{\Wc^2} |\Phi_\epsilon \rangle 
     \\&= \sup_{|\Psi\rangle \in \mH_{RA}} \sup_{\mathcal{D} \in \cptp(\mH_B, \mH_{\Ahat})} |\mW| \,\langle \phi| (\I\otimes \mathcal{D}\circ \mathcal{N} )(\dm{\Psi}) |\phi \rangle_{R\Ahat}\label{line:a1}
     \\&\geq \sup_{|\psi\rangle \in \mH_{\Wc^2}} \sup_{\mathcal{D} \in \cptp(\mH_B, \mH_{\Ahat})} |\mW| \, \langle \phi| (\I\otimes \mathcal{D}\circ \mathcal{N} )(\dm{\psi}) |\phi \rangle_{R\Ahat} \label{line:a2}
     \\&\geq \sup_{|\psi\rangle \in \mH_{\Wc^2}} \sup_{\mathcal{D} \in \cptp(\mH_{\Wc})} |\mW|\,\langle \phi| (\I\otimes \mathcal{D}\circ \mathcal{N} )(\dm{\psi}) |\phi \rangle_{R\Ahat} \label{line:a3}
\end{align}
In Line~\ref{line:a1} we have substituted Eq.~\ref{eq:littlephi}, and Line~\ref{line:a2} follows since $\mH_{\Wc^2} \subset \mH_{RA}$. For Line~\ref{line:a3} we apply Lemma~\ref{prop:cptp}. Since $|\psi\rangle \in \mH_{\Wc^2}$ is finite-dimensional, we may consider a restriction of $\mN$ to $\mT(\mH_{\Wc})$ for some finite dimensional subspace $\mH_{\Wc}^A \subset \mH_A$, the image of which also lies in some finite-dimensional subspace (e.g. $\text{span}\{\mN(|i\rangle \langle j|)\} \subset \mT(\mH_B)$, where $\{|i\rangle\} \subset \mH_{\Wc}^A$ are an orthonormal basis), and this image has dimension no greater than $|\mW|^2$. We then identify some $\mH_{\Wc}^B \subset \mH_B$ with dimension $\dim \mH_{\Wc}^B = \dim \mH_{\Wc} = |\mW|$ such that $\mN(\mT(\mH_{\Wc}^A))$ is contained in $\mT(\mH_{\Wc}^B)$. In other words, for $\mH_B = \mH_{\Wc}^B \oplus (\mH_{\Wc}^B)^\perp$ and $X \in \mT(\mH_{\Wc}^A)$, and we can embed $\mN(X) \in \mT(\mH_{\Wc}^B)$ into $X_B \in \mT(\mH_{B})$ such that $\mN(X)  = \Pi_{\Wc'} X_B \Pi_{\Wc'}$ for an orthogonal projector $\Pi_{\Wc'}$ onto $\mH_{\Wc}^B$. Finally, choosing $\mH_{\Wc}^{\Ahat} \subset \mH_{\Ahat}$ with $\dim \mH_{\Wc}^{\Ahat} = \dim \mH_{\Wc}$ and $\mH_{\Ahat} = \mH_{\Wc}^{\Ahat} \oplus (\mH_{\Wc}^{\Ahat})^\perp$, by Lemma~\ref{prop:cptp} for any channel $\mD' \in \cptp(\mH_{\Wc}^B, \mH_{\Wc}^{\Ahat})$ which has an extension $\mD \in \cptp(\mH_B, \mH_{\Ahat})$ such that $\mD'(\mN(X)) = \mD(X_B)$, which implies the final inequality. Since $\dim\mH_{\Wc}^A = \dim\mH_{\Wc}^B = \dim\mH_{\Wc}^{\Ahat}$, all of these subspaces are isomorphic to $\mH_{\Wc} = \C^{|\Wc|}$ and we drop the superscripts for clarity. However, it is implicit that the finite-dimensional channels $\mD$ are embedded in such a way to act only on finite-dimensional subsystems of infinite-dimensional states.

We now apply the definition of conditional min-entropy (Eq.~\ref{eq:minent}) for channels $\mD \in \cptp(\mH_{\Wc})$, we find
\begin{align}
     \sup_{|\Psi\rangle \in \mH_{RA}} \sup_{\mathcal{D} \in \cptp(\mH_B, \mH_{\Ahat})} \langle \Phi_\epsilon | \Pi_{\Wc^2} (\I\otimes \mathcal{D}\circ \mathcal{N} )(\dm{\Psi}) \Pi_{\Wc^2} |\Phi_\epsilon \rangle 
     &= \sup_{|\psi\rangle \in \mH_{\Wc^2}} \left(\sup_{\mathcal{D} \in \cptp(\mH_{\Wc})} |\mW|\, \langle \phi| (\I\otimes \mathcal{D} )(\rho_{RB}) |\phi \rangle_{R\Ahat} \right)
     \\&= \sup_{|\psi\rangle \in \mH_{\Wc^2}} - \Hmin(R|B)_{\rho}
     \\&\geq \sup_{|\psi\rangle \in \mH_{\Wc^2}} - \HHH(R|B)_{\rho}
\end{align}
with both entropies being taken with respect to $\rho_{RB} = (\I \otimes \mN)(\dm{\psi})$. This supremum is lower bounded by choosing any state in $\mH_{\Wc^2}$, and in particular we choose $|\phi\rangle$ to recover the statement of Theorem~\ref{thm:ent_recovery}.
\end{proof}

\section{Proof of entanglement fraction lower bound}\label{app:ent_bound}

Here, we prove the lower bound on the error for the entanglement fraction task (Theorem~\ref{thm:ent_bound}). As mentioned already, the minimum of $ \langle \Phi_\epsilon | (\I\otimes \mathcal{D}\circ \mathcal{N} )(\Psi) |  \Phi_\epsilon \rangle$ with respect to  $|\Psi\rangle \in \mH_{RA}$ will be trivially small, since the system $RA$ could always be prepared in a separable state. Instead, we will consider a constrained minimization in which the system $RA$ is guaranteed to have some fixed amount of entanglement, and then ask what the optimal performance of a quantum learner is for the worst-case initial state? We will now derive a bound for this task in terms of the conditional Von Neumann entropy of the system $RB$ using the quantum Fano inequality.

To this end, we will define a specific entangled state in infinite dimensions, and then impose that $|\Psi\rangle_{RA}$ has some minimum overlap with that state. Consider two  separable Hilbert spaces $\mH_R$ and $\mH_A$ with orthonormal bases $\{|j_R\rangle\}$ and $\{ |j_A\rangle\}$ respectively. For any fixed $k \in \mathbb{N}$, we consider an orthonormal set 
\begin{equation}
    S = \{ |i_R\rangle \otimes |j_A\rangle: i \in \mathcal{I}, j \in \mathcal{J} \}
\end{equation}
which is specified by sets of indices $\mathcal{I},\mathcal{J} $ of size $|\mathcal{I}|=|\mathcal{J}|=k$ and spans a $k^2$-dimensional subspace of $\mH_{RA}$. In this subspace there is a maximally entangled state $|\Omega^{S}\rangle \in \mH_{RA}$ of the form
\begin{equation}
    |\Omega^{S}\rangle := \frac{1}{\sqrt{k}} \sum_{j=1}^k |j_R\rangle \otimes |j_A\rangle.
\end{equation}
The maximal singlet overlap of any $\Psi \in \mH_{RA}$ with some maximally entangled state in $\text{span}(S)$ is then defined as 
\begin{align}
    q_k(\Psi, S) = \sup_{\mathcal{M} \in \cptp(\mH_A)} \bigl\langle \Omega^S| ( \I \otimes \mathcal{M})(\Psi) | \Omega^{S} \bigr\rangle 
\end{align}
From this, we may define a notion of singlet fraction for $\Psi$ as the largest achievable singlet fraction, with respect to all subsets $S$:
\begin{equation}
    q_k(\Psi) := \sup_{\substack{S\subset \mH_{RA} \\ |S| = k^2}} q(\Psi, S),
\end{equation}
where the supremum is to be taken over choices of bases for $\mH_R$ and $\mH_A$ as well as sets $\mathcal{I}$ and $\mathcal{J}$. In this way, $q(\Psi)$ describes the maximum overlap of $\Psi$ with \textit{some} maximally entangled state in any $k\times k$-dimensional subspace of $\mH_{RA}$. 

To prove Theorem~\ref{thm:ent_bound}, we follow a similar procedure as for the proof of Theorem~\ref{thm:ent_recovery}, namely we focus on states that are piecewise constant within the $\epsilon$-balls that constitute an $\epsilon$-packing $\mV:= \mV(\epsilon, \Ac, d)$. We begin by defining the function $\hat{V}: \Ac \rightarrow \Ac$ that identifies the $\epsilon$-ball containing $r \in \Ac$:
\begin{equation}
    \hat{V}(r) \in \{v: \alpha_v = \argmin_{\alpha_{v'} \in \mV} d(r, \alpha_{v'})\},
\end{equation}
and as before, ties are irrelevant and may be broken in some consistent way. For each $\alpha_v \in \mV$,  we define $\Vc(v) := \mathbb{B}_\epsilon(\alpha_v)$ the $\epsilon$-ball at $\alpha_v$. Unlike the scenario of Theorem~\ref{thm:ent_recovery}, the set of $\epsilon$-balls composing $\mathcal{V}$ does not form a partition of $\Ac$. An (unnormalized) state that is piecewise constant on $\Vc$ is 
\begin{equation}
    |v\rangle:= \int_{\Vc(v)} dx |x\rangle.
\end{equation}
We define the subspace $\mH_{\Vc} \subset \mH_{\Ahat}$ containing the set of states which are piecewise constant on the $\epsilon$-packing  $\mV$ of $\Ac$, $|\psi\rangle_{\Ahat} = \sum_{v=1}^{|\mV|} \psi_v |v\rangle$. Similarly, $\mH_{\Vc^2}$ is the space of bipartite states of the form $\ket{\psi}_{RA} = \sum_{v=1}^{|\mV|} \psi_v \ket{v}_R \otimes \ket{v}_A.$ In particular, the subspace $\mH_{\Vc^2}\subset \mH_{R\Ahat}$ contains the maximally entangled state 
\begin{equation}\label{eq:littlephiV}
    |\phi\rangle_{R\Ahat} = \frac{1}{|\mathcal{V}|^{1/2}} \sum_{v=1}^{|\mathcal{V}|} |v_R\rangle \otimes | v_{\Ahat}\rangle
\end{equation}
Again, we will use the projector $\Pi_{\Vc^2}$ onto the set $K\subset \Ac \times \Ac$ containing all coordinates in the same $\epsilon$-ball:
\begin{equation}
    K_{\Vc^2} := \{(r,\alpha) \in \Ac \times \Ac: \hat{V}(r) = \hat{V}(\alpha) \}
\end{equation}
In this setting, we may find a non-trivial infinum by requiring $\Psi$ to have a singlet fraction $q_k(\Psi)=F$ with some $k$-dimensional singlet. Intuitively, this means that the adversary initially prepares $RA$ in an entangled state, without imposing any specifics on the basis or subspace in which that state is entangled. 

\begin{reptheorem}{thm:ent_bound}[Entanglement fraction bound]
Let $|\Phi_\epsilon\rangle$, $\mathcal{D}$, and $\mathcal{N}$ be as defined in Theorem~\ref{thm:ent_recovery}, and define $q_k(\Psi)$ to be the singlet fraction of $\Psi\in\mH_{RA}$ computed in \textit{any} $(k\times k)$-dimensional subspace of $\mH_{RA}$. Then, 
\begin{equation}
      \inf_{\substack{|\Psi\rangle_{RA}\in \mH_{RA} \\ q_{|\mathcal{V}|}(\Psi) = 1}} \sup_{\mathcal{D} \in \cptp(\mH_B, \mH_{\Ahat})} \bigl( |\mV| - \langle \Phi_\epsilon | \Pi_{\Vc^2} (\I\otimes \mathcal{D}\circ \mathcal{N} )(\dm{\Psi}) \Pi_{\Vc^2}|\Phi_\epsilon \rangle \bigr) \geq |\mV|\left(\frac{\HHH(RB)_{\rho} - 1}{\log(|\mathcal{V}|^2 - 1)}\right) 
\end{equation}
where  $\rho = (\I \otimes \mathcal{N})(\dm{\phi})$ as defined in Eq.~\ref{eq:littlephiV}.
\end{reptheorem}
\begin{proof}
    Similarly to Eq.~\ref{eq:littlephi}, we may compute 
    \begin{equation}\label{eq:little_phiV2}
        \Pi_{\Vc^2}|\Phi_\epsilon\rangle = \sum_{v \in \mV} |v\rangle_R \otimes |v\rangle_{\Ahat} := |\mV|^{1/2} |\phi\rangle_{R\Ahat}
    \end{equation}
    For any fixed $S$, since $q_k(\Psi, S) = 1$ implies $q_k(\Psi) =1$ we have the inclusion
    \begin{equation}\label{eq:lineb3_key}
        \{|\Psi\rangle \in \mH_{\Vc^2}: q_k(\Psi, S) = 1\} \subseteq \{|\Psi\rangle \in \mH_{RA}: q_k(\Psi) = 1\}.
    \end{equation}
    With this, the proof proceeds almost identically to that of Theorem~\ref{thm:ent_recovery}: 
\begin{align}
      \inf_{\substack{|\Psi\rangle \in \mH_{RA} \\ q_k(\Psi) = 1}} &\sup_{\mathcal{D} \in \cptp(\mH_B, \mH_{\Ahat})}   \langle \Phi_\epsilon | \Pi_{\Vc^2}(\I\otimes \mathcal{D}\circ \mathcal{N} )(\dm{\Psi}) \Pi_{\Vc^2} | \Phi_\epsilon\rangle  \label{line:b1}
     \\&\leq \inf_{\substack{|\psi\rangle \in \mH_{\mathbb{V}^2} \\ q_k(\psi) = 1}}  \sup_{\mathcal{D} \in \cptp(\mH_B, \mH_{\Ahat})}  \langle  \Phi_\epsilon | \Pi_{\Vc^2}(\I\otimes \mathcal{D}\circ \mathcal{N} )( \dm{\psi} ) \Pi_{\Vc^2}|\Phi_\epsilon \rangle \label{line:b2}
     \\&\leq \inf_{\substack{|\psi\rangle \in \mH_{\mathbb{V}^2} \\ q(\psi, \mH_{\Vc}) = 1}}  \sup_{\mathcal{D} \in \cptp(\mH_B, \mH_{\Ahat})}  \langle  \Phi_\epsilon |\Pi_{\Vc^2} (\I\otimes \mathcal{D}\circ \mathcal{N} )( \dm{\psi} ) \Pi_{\Vc^2}|\Phi_{\epsilon} \rangle \label{line:b3}
     \\&= \inf_{\substack{|\psi\rangle \in \mH_{\mathbb{V}^2} \\ q(\psi, \mH_{\Vc}) = 1}}  \sup_{\mathcal{D} \in \cptp(\mH_{\mathbb{V}})}  \langle  \Phi_\epsilon |\Pi_{\Vc^2} (\I\otimes \mathcal{D}\circ \mathcal{N} )( \dm{\psi} ) \Pi_{\Vc^2}|\Phi_{\epsilon} \rangle   \label{line:b4}
     \\&\leq   \sup_{\mathcal{D} \in \cptp(\mH_{\Vc})}   |\mV|\, \langle \phi | (\I\otimes \mathcal{D}\circ \mathcal{N} )  (\dm{\phi}) | \phi  \rangle \label{line:b5}
     \\&= q(R|B)_{\rho }
     \\&\leq |\mV|\left( 1 - \frac{\HHH(RB)_{\rho } - 1}{\log(k^2 - 1)} \right) \label{line:b7}
\end{align}
In Line~\ref{line:b2} we have used $\mH_{\Vc^2}\subset \mH_{RA}$, and in Line~\ref{line:b3} we have used the inclusion of Eq.~\ref{eq:lineb3_key}. For Line~\ref{line:b4} we apply Lemma~\ref{prop:cptp} in an identical manner as in the proof of Theorem~\ref{thm:ent_recovery}, and Line~\ref{line:b5} follows by specifying $|\psi\rangle = |\phi\rangle$ as defined in Eq.~\ref{eq:little_phiV2},  followed by the quantum Fano inequality (Proposition~\ref{prop:qfano}) in Line~\ref{line:b7}. Specifying $k = |\mV|$ completes the proof.

\end{proof}

\end{appendices}

\end{document}